\newcommand{\supp}{\mathop\mathrm{supp}\nolimits}
\newcommand{\spanl}{\mathop\mathrm{span}\nolimits} 
\newcommand{\dist}{\mathop\mathrm{dist}\nolimits}
\newcommand{\ie}{\emph{i.e.}}
\newtheorem{theorem}{Theorem}
\newtheorem{proposition}[theorem]{Proposition}
\newtheorem{lemma}[theorem]{Lemma}
\newtheorem{corollary}{Corollary}
\newtheorem{example}{Example}
\newtheorem{remark}{Remark}
\newtheorem{conjecture}{Conjecture}
\newcommand{\De}{\Delta}
\newcommand{\be}{\beta}
\newcommand{\bC}{\mathbb{C}}
\newcommand{\la}{\lambda}
\newcommand{\La}{\Lambda}
\newcommand{\C}{\mathcal C}
\newcommand{\Si}{\Sigma}
\newcommand{\si}{\sigma}
\newcommand{\ga}{\gamma}
\newcommand{\eps}{\epsilon} 
\newcommand{\N}{\mathbb{N}}
\newcommand{\NN}{\mathbb{N}_{0}}
\newcommand{\R}{{\mathbb{R}}}
\newcommand{\dd}{{\rm d}}
\DeclareMathOperator*{\wlim}{w-lim\ }
\DeclareMathOperator*{\slim}{s-lim\ }
\begin{document}
          \numberwithin{equation}{section}

          \title[Spectral asymptotic of QES-quartic potential]
          {On spectral asymptotic of  quasi-exactly solvable quartic potential}

\author[B. Shapiro]{Boris Shapiro}\thanks{B.Sh. is the corresponding author}
\address{   Department of Mathematics,
            Stockholm University,
            S-10691, Stockholm, Sweden}
\email{shapiro@math.su.se}

\author[M.~Tater]{Milo\v{s} Tater}
\address{Department of Theoretical Physics, Nuclear Physics Institute, 
Academy of Sciences, 250\,68 \v{R}e\v{z} near Prague, Czech
Republic}
\email{tater@ujf.cas.cz}

\date{\today}
\keywords{Heun equation,  spectral polynomials}
\subjclass[2010]{34L20 (Primary); 30C15, 33E05 (Secondary)}

\begin{abstract} Motivated by the earlier results of \cite {MuTa} and \cite {ShTaTa}, we discuss below the  asymptotic of the solvable part  of the spectrum for the quasi-exactly solvable quartic oscillator.  
 In particular,  we formulate a conjecture on the  coincidence of the asymptotic shape  of the level crossings of the latter oscillator with the asymptotic shape of  zeros of the Yablonskii-Vorob'ev polynomials.   Further we present a numerical study of the spectral monodromy for the oscillator in question.

\end{abstract}
\maketitle

\section{Introduction}

\medskip
A quasi-exactly solvable quartic oscillator was introduced by C.~M.~Bender and S.~Boettcher  in \cite{BeBo} and (in its restricted form) is a Schr\"odinger-type eigenvalue problem of the form
\begin{equation}\label{eq:QESQU}
L_J(w)=w^{\prime\prime}-(x^4/4-ax^2/2-Jx)w=\la w
\end{equation}
with the boundary conditions $w(t)\to 0$ and $w(te^{2\pi i/3})\to 0$ as $t\to +\infty,$ where $a\in \bC$ and $J$ are parameters of the spectral problem.  
With these boundary conditions, real $a$ and $J$, \eqref{eq:QESQU} is not a Hermitian but is a $PT$-symmetric operator,  see \cite{Shin}, \cite{MuTa} and \cite {EG}. If $J=n+1$ is a positive integer, then $L_{n+1}(w)$ maps the linear space of quasi-exactly solutions of the form $\{pe^{-x^3/6 + ax/2}: \deg p \le n\}$  to itself where $p$ belongs to the linear space of  univariate polynomials of degree at most $n$. The restriction of the operator $L_{n+1}(w)$ to the latter linear space is a finite-dimensional linear operator whose spectrum and eigenfunctions can be found explicitly using methods of linear algebra. This part of the spectrum and eigenfunctions of \eqref{eq:QESQU} is usually  referred to as {\em solvable}. (Observe that the operator $L_J(w)$ given by \eqref{eq:QESQU} has a negative spectrum while physicists usually  prefer to work with $-L_J(w)$ whose spectrum is positive.) 

\medskip
One can easily  show that  polynomial factors $p$ in the quasi-exactly solutions $w=pe^h$ of \eqref{eq:QESQU} coincide with the  polynomial solutions of the degenerate Heun equation 
\begin{equation}\label{eq:degHeun}
y''-(x^2-a)y'+(\alpha x-\la)y=0,
\end{equation}
where $a\in \bC$ has the same meaning as above and $(\alpha,\la)$ are the spectral variables.   Obviously,  if equation \eqref{eq:degHeun} has a polynomial solution 
of degree $n,$ then $\alpha=n$. Furthermore, to get a polynomial solution of \eqref{eq:degHeun} of degree $n$,  the remaining spectral variable 
 $\la$ should be  chosen as an eigenvalue of the operator 
 \begin{equation}\label{eq:oper}
 T_n(y)=y''-(x^2-a)y'+n xy
 \end{equation}
 acting on the linear space of polynomials of degree at most $n.$ 

\smallskip 
 In the standard monomial basis $\{1,x,x^2,\dots,x^n\}$ of the latter linear space, the action of the operator $T_n$ is given  by  the 
 $4$-diagonal $(n+1)\times(n+1)$-matrix of the form 

\begin{center}
\hspace{-2cm}
\begin{equation}\label{matrix}
M_n^{(a)} := \left(
\begin{matrix}
0 & a & 2 & 0 & 0 & \cdots & 0 \\
n & 0 & 2a & 6 & 0 & \cdots & 0 \\
0 & n-1 & 0 & 3a & 12 & \cdots & 0 \\
\vdots & \vdots & \ddots & \ddots & \ddots & \ddots & \vdots \\
0 & 0 & \cdots & 3 & 0 & (n-1)a & n(n-1) \\
0 & 0 & \cdots & 0 & 2 & 0 & na \\
0 & 0 & \cdots & 0 & 0 & 1 & 0
\end{matrix}
\right).
\end{equation}
\end{center}

\medskip
We will call the bivariate characteristic polynomial $Sp_n(a,\la):=\det(M_n^{(a)}-\la I)$ the {\em $n$-th spectral polynomial of \eqref{eq:degHeun}}. The  degree of  $Sp_n(a,\la)$ equals $n+1$ which is also its degree with respect to the variable $\la$. (The  maximal degree of the variable $a$ in  $Sp_n(a,\la)$ equals $\left[\frac{n+1}{2}\right]$).      
Additionally observe that for $a\in \R$,  $Sp_n(a,\la)$ is a real polynomial in $\la$ and therefore the spectrum of $M_n^{(a)}$ is symmetric with respect to the real axis. 

\medskip
The main goal of this paper is to study the asymptotic of the spectrum of $M_n^{(a_n)}$  in two different regimes.  In the first regime we require that $\lim_{n\to\infty}\frac{a_n}{n^{2/3}}=0$  while  in the second regime we require that  $\lim_{n\to\infty}\frac{a_n}{n^{2/3}}=A\neq 0$. 

\medskip
\noindent
{\em Remark.} In case of a non-degenerate Heun equation detailed study of similar asymptotic was carried out  in \cite {ShTa} and \cite {ShTaTa}. 

\subsection{Main results} \hfill \\ 

\begin{theorem}\label{th:main}
\rm{(i)} If  $\lim_{n\to\infty}\frac{a_n}{n^{2/3}}=0$, the maximal absolute value $r_n(a_n)$ of the eigenvalues of $M_n^{(a_n)}$   
   grows as $\frac{3}{4} n^{4/3}$, i.e.,  $r_n(a_n)=\frac{3}{4} n^{4/3}(1+o(1)).$

\medskip
\noindent 
\rm{(ii)} If  $\lim_{n\to\infty}\frac{a_n}{n^{2/3}}=0$,   the sequence $\{\mu_n^{(a_n)}\}$ of eigenvalue-counting measures  
 for the spectra of the sequence of matrices $\{\frac{1}{n^{4/3}} M_n^{(a_n)}\}$  weakly  converges   to the measure $\nu_0$ supported on the union of three straight intervals connecting the origin with three cubic roots of  $\frac{27}{64}$, see Figure~\ref{fig1}.
\end{theorem}

More information about $\nu_0$ can be found in \S~2.

\begin{figure}[H]

\begin{center}
\includegraphics[width=0.4\textwidth]{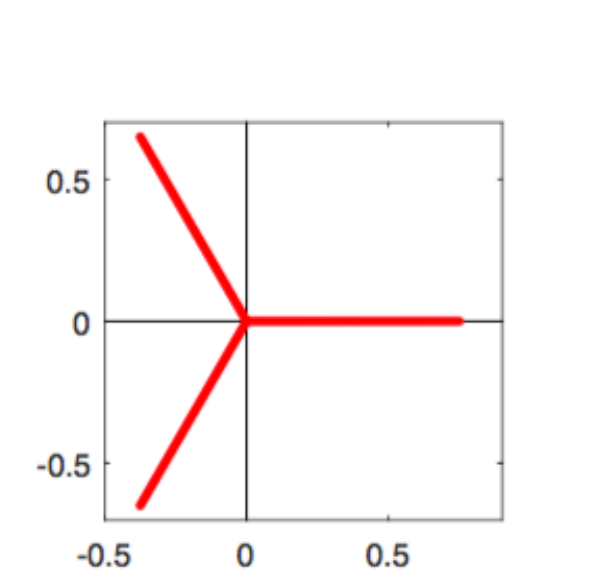}
\end{center}


\caption{ Distributions of the eigenvalues of $M_n^{(0)}$ scaled by $n^{4/3}$ for n=200. }
\label{fig1}
\end{figure}









\medskip
Our next result is obtained on the physics level of rigor, i.e., modulo three convergence assumptions which are explicated  in the proof of Proposition~\ref{quart}, see \S~\ref{sec:phys}. 
 To formulate it,  take the family  of equations 
\begin{equation}\label{quadreq}
\C^2-(x^2-A)\C+(x-\be)=0
\end{equation}
which we  consider as   quadratic equations in the variable $\C$ depending on the main variable $x$ and additional parameters $A$ and $\be$.

\medskip
\begin{proposition}\label{quart}   {\rm  If  $\lim_{n\to\infty}\frac{a_n}{n^{2/3}}=A\neq 0$, then (under three additional convergence assumptions  given in the proof) the sequence $\{\mu_n^{(a_n)}\}$ of eigenvalue-counting measures for the sequence  of matrices $\{\frac{1}{n^{4/3}}M_n^{(a_n)}\}$  weakly converges  to a special compactly supported probability measure $\nu_A$.

 In particular, 
the support of $\nu_A$ consists of all values of the spectral parameter $\be$ for which there exists a compactly supported  in the $x$-plane probability measure $\kappa$  whose Cauchy transform $\C_\kappa(x)$ satisfies 
\eqref{quadreq} in this plane almost everywhere.  } 
\end{proposition}

\begin{remark}{\rm 
Numerical experiments and some theoretical considerations indicate  that if $\lim_{n\to\infty}\frac{a_n}{n^{2/3}}$ does not exist then there is no non-trivial limiting behavior of the spectrum. We think that for $\lim_{n\to \infty}\frac{|a_n|}{n^{2/3}}=+\infty$, one can rigorously prove the latter observation. However we are not trying to pursue this aspect in the present paper.}
\end{remark} 

\begin{remark}{\rm 
To make the paper self-contained, let us recall that for a complex-valued measure $\mu$ compactly supported in $\bC$, its logarithmic potential is defined as 
 $$\mathfrak u_\mu(x):=\int_\bC \ln|x-\xi|d\mu(\xi)$$ and   its Cauchy transform is defined as
$$\C_\mu(x):=\int_\bC \frac{d\mu(\xi)}{x-\xi}=\frac{\mathfrak \partial \mathfrak u_\mu(x)}{\partial x}.$$}
\end{remark} 

\begin{remark}{\rm 
The existence of a signed (real, but not necessarily positive) measure $\mu$  whose Cauchy transform $\C_\mu(x)$ satisfies 
\eqref{quadreq} almost everywhere in $\bC$ is closely related to the properties of the family of quadratic differentials  
\begin{equation}\label{quadquart}
\Psi_{A,\be}=-((x^2-A)^2-4(x-\be))dx^2
\end{equation}
depending on parameters  $A$ and $\be$.  We will present some information about this connection in \S~\ref{sec:phys}. (For an accessible  introduction to quadratic differentials, see e.g. \cite {Str}).}
\end{remark} 


\subsection{Asymptotic  distribution of  the branching points  of $Sp_n(a,\la)$ and Yablonskii-Vorob'ev polynomials} \hfill \\

\medskip
To finish the introduction,  let us formulate a number of conjectures supported by extensive numerical experiments. For a given positive integer $n$,  denote by $\Si_n$ the set of all branching points of the projection of the algebraic curve $\Gamma_n(a): \{Sp_n(a,\la)=0\}$ to the $a$-axis. In other words, $\Si_n$ is the set of all values of the complex parameter $a$ for which the matrix 
$M_n^{(a)}$ has a multiple eigenvalue, i.e.,  $Sp_n(a,\la)$ has a multiple root. In physics literature such points are called \emph{level crossings}.  Obviously, one can describe   $\Si_n$ as  the zero locus of the univariate discriminant polynomial $D_n(a)$ which is the resultant of $Sp_n(a,\la)$ and $\frac{\partial Sp_n(a,\la)}{\partial\la}$ with respect to $\la$.   
One can show that the degree of $D_n(a)$ equals $\binom{n+1}{2}$. 

\medskip
Further recall that Yablonskii-Vorob'ev polynomials $\{Q_n\}$ are defined as follows, see e.g. \cite{Vo, Ya}. Set $Q_0=1,\; Q_1=t$, and  for $n\ge 1$, define  
$$Q_{n+1}=\frac {t \cdot Q_n^2-4(Q_n \cdot Q^{\prime\prime}_n-(Q^\prime_n)^2)}{Q_{n-1}}.$$
Although the latter expression a priori determines  a rational function, $Q_n$ is in fact a polynomial of degree $\binom {n+1}{2}$, see e.g. \cite{Tan}.  The importance of Yablonskii-Vorob'ev polynomials is explained by the fact that all rational solutions of the second Painlev\'e equation
$$u_{tt}=tu+2u^3 +\alpha, \alpha \in \bC,$$
are presented in the form
$$u(t)=u(t;n)=\frac{d}{dt}\left\{\ln\left[\frac{Q_{n-1}(t)}{Q_n(t)}\right]\right\},\; u(t,0)=0,\; u(t;-n):=-u(t;n).$$
Denote by $\mathcal Z_n$ the zero locus of  $Q_n$. (Good exposition of the properties of $Q_n$ together with several pictures of $\mathcal Z_n$ can be found in \cite{ClMa}). 
 Our conjectures below reveal an unexpected connection of $\Si_n$ and $\mathcal Z_n$.

\begin{remark}
{\rm One can show that the maximal absolute value of points  in $\mathcal Z_n$ grows as $ \frac{3}{\root 3 \of 2}  n^{2/3}$, see \cite{BM} and  \S~\ref{sec:YV}.  Similarly, the maximal absolute value of points  in $\Si_n$ grows as $ \frac{3}{\root 3 \of 4}  n^{2/3}$, see  Lemma~\ref{apositive} and Corollary~\ref{cor:si} below. } \end{remark}

\begin{conjecture}\label{con:lat} Given a positive integer $\ell$, define $\Upsilon_\ell:=\Si_{3\ell}\cup\Si_{3\ell+1}\cup \Si_{3\ell+2}$. For $R>0$, let $K_R$ be the square with the side $2R$ centered at the origin.  Then for any fixed $R>0$ and $\ell\to \infty$, the points in $\Upsilon_\ell$ converge  inside $K_R$  
to the nodes of a certain fixed hexagonal lattice.
\end{conjecture} 

\begin{figure}[H] 

\begin{center}

\includegraphics[scale=0.5]{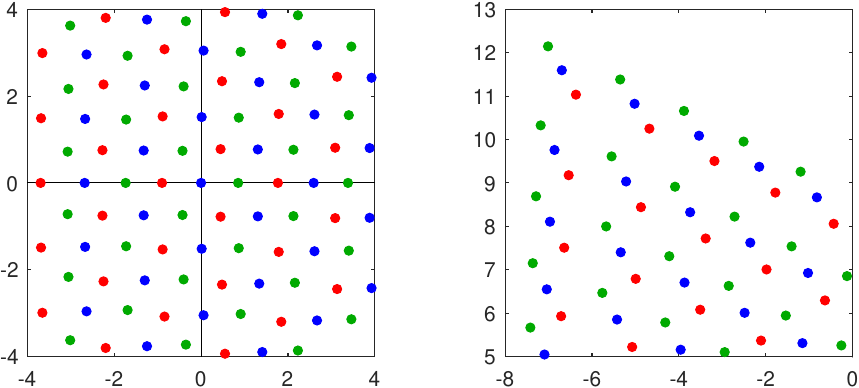}

\end{center}

\caption {The points of $\Upsilon_5$ inside the square $K_4$ (left) and close to the upper left corner (right). The points of $\Si_{15}$ are shown in red, $\Si_{16}$ in blue, and $\Si_{17}$ in green.}

\label{lattice}
\end{figure}

\medskip
\begin{conjecture}\label{conj:Yab} Set $\widetilde \Si_n=\root {3} \of {2}\cdot \Si_n$, i.e., multiple every point in $\Si_n$ by $\root {3} \of {2}$. Then every point in $\widetilde \Si_n$ lies very close to the unique point in $- \mathcal Z_n$ and vice versa, see  Fig.~\ref{BBigtriangle}. Fixing $n$, define $d(n):=max_{p\in \widetilde \Si_n} \min_{q\in -\mathcal Z_n} d(p,q)$, i.e., $d(n)$ is the maximal distance between  points in  $\widetilde \Si_n$ and their respective nearest points in $- \mathcal Z_n$. 

 The sequence $\{d(n)\}$ is very slowly growing with $n$, see Example~\ref{ex:ex} below. It might  even have a limit when $n\to \infty$. Moreover for any fixed $R>0$, the sequence $d_R(n)$ converges to $0$ where $d_R(n)$ is a similar maximin of the pairwise distances taken over all points in  $\widetilde \Si_n$ and $-\mathcal Z_n$ which lie inside the square $K_R$. 
 \end{conjecture}
 
 \begin{example}\label{ex:ex} {\rm Numerical experiments show that for   $n=10,  15,   20,  25, 30,  35,  40$, the corresponding values of $d(n)$ are  approximately $0.03016, 0.04160,0.051156,   0.05837, $ $ 0.06378, 0.06863, 0.07272$ respectively.} 
\end{example}

\begin{figure}
\begin{center}
\includegraphics[scale=0.45]{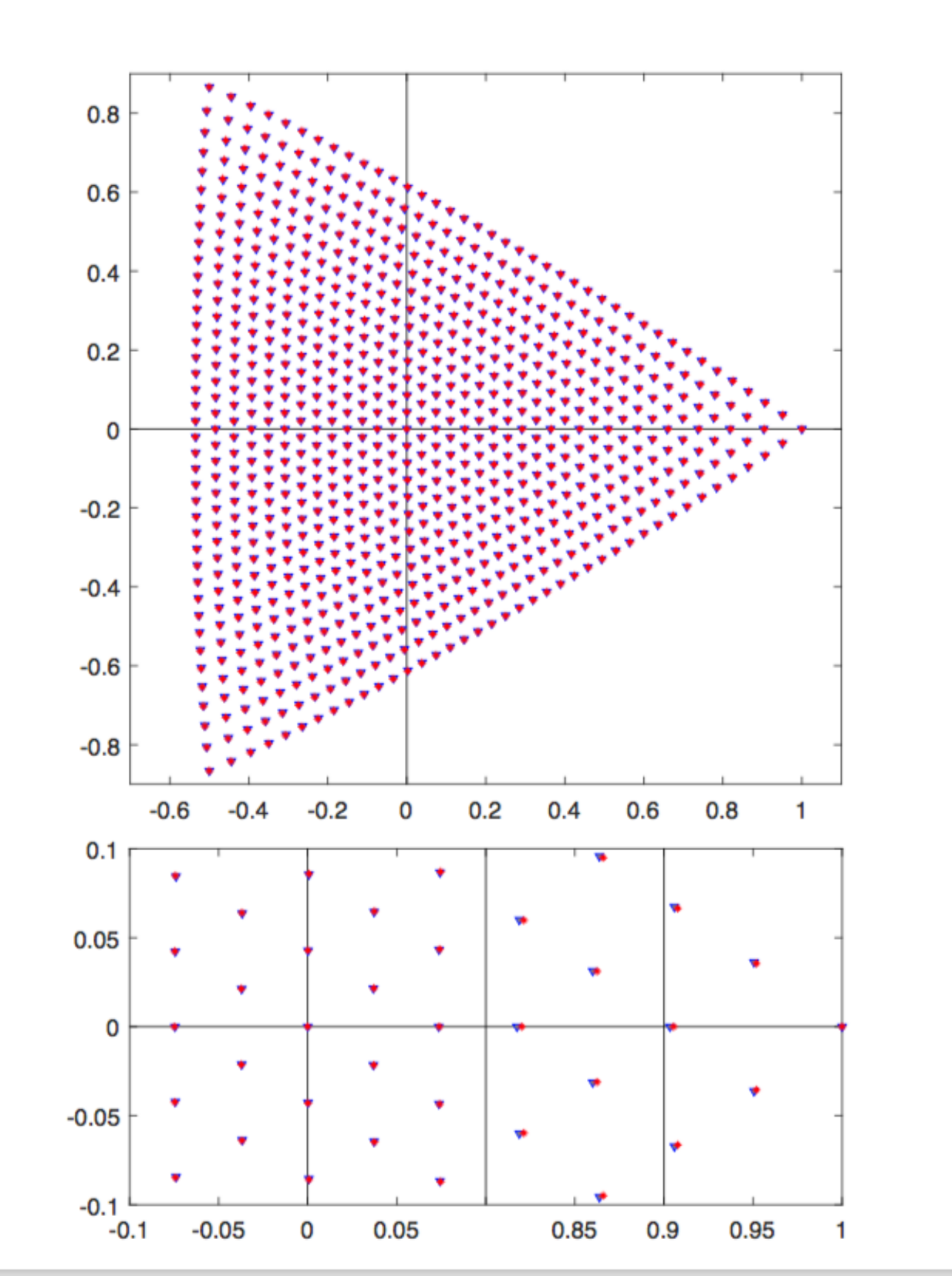}  
\end{center}

\caption{The  set $\Si_{40}$ of the branching points for $Sp_{40}(a,\la)$ together with the zero locus $-{\mathcal Z}_{40}$  after scaling sending the three corners to the cubic roots of $1$. The blue dots are the scaled roots of $-{\mathcal Z}_{40}$ while the red ones are the scaled branching points of $Sp_{40}(a,\la)$. The bottom picture shows a fragment of the top one with substantial magnification.  (Observe the surprising closeness; the distinction is hardly visible by a naked eye!)}\label{BBigtriangle}
\end{figure}



For illustration of Conjecture~\ref{conj:Yab}   see Figure~\ref{BBigtriangle}. We hope that considerations similar to that in \cite{Ma, MaBe} can help to settle it.

\begin{conjecture}\label{conj:scaling}
When $n\to\infty$,  the points in the sequences $ \frac{\root {3} \of {4}}{3 n^{2/3}} \{\Si_{n}\}$ and $\{ -\frac{\root {3} \of {2}}{3 n^{2/3}} {\mathcal Z}_{n}\}$  asymptotically fill the same   curvilinear triangular shape $\mathfrak F$,  see Fig~\ref{BBigtriangle}. 

 The interior of  $\mathfrak F$ consists  of all values  $a\in \bC$ for which the  support of measure $\nu_a$ introduced in Proposition~\ref{quart} is a tripod, i.e., consists of three smooth segments with a common point, see Figure~\ref{figA} (left). 
 
 The complement of $\mathfrak F$ consists of all values  $a\in \bC$ for which the  support of $\nu_a$ is a single smooth segment, see Figure~\ref{figA} (down). The boundary of $\mathfrak F$ consists of those $a\in \bC$ for which the support of $\nu_a$ is a single curve with a singular point, i.e., $a$ belongs to the boundary between the domain where this support is a tripod  and the domain where the support is a smooth single curve, see Figure~\ref{figA} (right). 
\end{conjecture}

Conjecture~\ref{conj:scaling} complements Conjecture~\ref{conj:Yab}. For the sequence $\{{\mathcal Z}_{n}\}$ parts of the latter conjecture have been settled in \cite{BM}, see also \cite{BeBot}.

\begin{remark} Data sharing not applicable to this article as no datasets were generated or analysed during the current study.
\end{remark}

\medskip
\noindent
{\em Acknowledgements.} The authors are sincerely grateful to M.~Duits, A.~Kuijlaars, and A.~Gabrielov for  discussions. The first author wants to thank F.~{\v S}tampach for his help with the complex version of the main result of \cite{KvA} and  Nuclear Physics Institute at \v{R}e\v{z} near Prague for the hospitality. The first author is partially supported by International Laboratory of Cluster Geometry NRU HSE, RF Government grant, ag. № 075-15-2021-608 from 08.06.2021.  The second author was supported by the Czech Science Foundation (GA\v CR) within the project 21-07129S. We are very thankful to our anonymous referees for their careful reading of the original submission and a large number of useful suggestions.   

\section {Case  $\lim_{n\to\infty}\frac{a_n}{n^{2/3}}=0$. Proof of Theorem~\ref{th:main}}\label{sec:2}

Let us start our considerations of the spectral asymptotic  of sequences $\{M_n^{(a_n)}\}$ with the simplest case $a_n=0,\; n=1,2,3,\dots$.  

\begin{proposition}\label{conj:2}{\rm

\medskip
 \noindent 
 \rm{(i)} 
 The
sequence $Sp_{n}(0,\la)$ splits into the following three subsequences: 

\begin{enumerate}
\item{for $n+1=3\ell$, the polynomial $Sp_{n}(0,\la)$ contains only the 
powers of $\la$ divisible by $3$; 
i.e., we have
$Sp_{n}(0,\la)=u^{(\ell)}(\la^3)$;} 
\item{for $n+1=3\ell+1$, one has $Sp_{n}(0,\la)=\la v^{(\ell)}(\la^3)$;}
\item{for $n+1=3\ell+2$, one has 
$Sp_{n}(0,\la)=\la^2w^{(\ell)}(\la^3)$,}
\end{enumerate}
where $u^{(\ell)}(\xi), v^{(\ell)}(\xi)$, and $w^{(\ell)}(\xi)$ are polynomials of degree $\ell$ in the variable $\xi$.   

\medskip
 \noindent 
 \rm{(ii)}  All three polynomials $u^{(\ell)}(\xi), v^{(\ell)}(\xi),
w^{(\ell)}(\xi)$ 
have simple and  negative  roots. } 
\end{proposition}

\begin{corollary} {\rm (a) The spectrum of $M_n^{(0)},$ i.e., the zero locus of $Sp_n(0,\la)$, is invariant under 
 the rotation by the angle $2\pi/3$ around the origin. 
 
 \noindent
 (b)  For any positive integer $n$,  the spectrum of $M_n^{(0)}$ is  located on the union of the three rays through the origin as illustrated in Figure~\ref{fig1}.} 
\end{corollary}

To prove Proposition~\ref{conj:2} following  the ideas of \cite {KvA}, we need to introduce a double indexed polynomial sequence containing our original sequence of characteristic polynomials $\{Sp_n(0,\la)\}$. The most natural way to do this is to consider the principal minors of $M_n^{(0)}-\la I$ given by: 
\begin{center}
\hspace{-2cm}
\begin{equation}\label{Nmatrix}
M_n^{(0)} -\la I:= \left(
\begin{matrix}
-\la & 0 & 2 & 0 & 0 & \cdots & 0 \\
n & -\la & 0 & 6 & 0 & \cdots & 0 \\
0 & n-1 & -\la & 0 & 12 & \cdots & 0 \\
\vdots & \vdots & \ddots & \ddots & \ddots & \ddots & \vdots \\
0 & 0 & \cdots & 3 & -\la & 0 & n(n-1) \\
0 & 0 & \cdots & 0 & 2 & -\la & 0 \\
0 & 0 & \cdots & 0 & 0 & 1 & -\la
\end{matrix}
\right).
\end{equation}
\end{center}

\medskip
Namely, denote by $\De_n^{(k)}(\la)$ the $k$-th principal minor of \eqref{Nmatrix}. Then $$Sp_n(0,\la)=\De_n^{(n+1)}(\la)=\det(M_n^{(0)} -\la I) .$$ Since the latter matrix is $4$-diagonal with one subdiagonal and two superdiagonals, then, by a general result of \cite{PeZa}, its principal minors satisfy  a $4$-term linear recurrence relation. Simple explicit calculation gives 
\begin{equation}\label{recrel}
\De_n^{(k)}(\la)=-\la\De_n^{(k-1)}(\la)+(n-k+2)(n-k+3)(k-1)(k-2)\De_n^{(k-3)}(\la),
\end{equation}
where $n$ is fixed and $k$ runs from $1$ to $n+1,$ with the standard initial conditions 
$$\De_n^{(-2)}(\la)=\De_n^{(-1)}(\la)=0, \quad \De_n^{(0)}(\la)=1.$$
 Notice that recurrence~\eqref{recrel}  has variable coefficients depending both on $k$ and $n$. 

\medskip
Another form of \eqref{recrel} which is a bit easier to study is as follows. To simplify our manipulations with the signs, let us instead of $M_n^{(0)}-\la I$  consider  the principal minors of $M_n^{(0)}+\la I$. Introducing $\nabla_n^{(k)}(\la):=\De_n^{(k)}(-\la)$, we get the  recurrence:  
\begin{equation}\label{recrelmod1}
\nabla_n^{(k)}(\la)=\la\nabla_n^{(k-1)}(\la)+(n-k+2)(n-k+3)(k-1)(k-2)\nabla_n^{(k-3)}(\la)
\end{equation}
with the initial conditions 
$$\nabla_n^{(-2)}(\la)=\nabla_n^{(-1)}(\la)=0, \quad \nabla_n^{(0)}(\la)=1.$$

\medskip
\begin{lemma}\label{lm:1} Set $\nabla_n^{(3j)}(\la)=U_n^{(j)}(\xi)$, 
$\nabla_n^{(3j+1)}(\la)=\la V_n^{(j)}(\xi)$, $\nabla_n^{(3j+2)}(\la)=\la^2W_n^{(j)}(\xi)$,  where $\xi=\la^3$ and  $U_n^{(j)}, V_n^{(j)}, W_n^{(j)}$ are monic polynomials of degree $j$.

\medskip
Using $U_n^{(j)}, V_n^{(j)}, W_n^{(j)}$,  recurrence~\eqref{recrelmod1} can be rewritten as the system: 
\begin{equation}\label{eq:rec3}
\begin{cases} 
U_n^{(j)}(\xi)=\xi  W_n^{(j-1)}(\xi)+(n-3j+2)(n-3j+3)(3j-2)(3j-1) U_n^{(j-1)}(\xi)\\
V_n^{(j)}(\xi)= U_n^{(j)}(\xi)+(n-3j+1)(n-3j+2)(3j-1)3j\,  V_n^{(j-1)}(\xi)\\
W_n^{(j)}(\xi)= V_n^{(j)}(\xi)+(n-3j)(n-3j+1)3j(3j+1)  W_n^{(j-1)}(\xi)\
\end{cases}
\end{equation}
with the initial conditions $U_n^{(0)}(\xi)=V_n^{(0)}(\xi)=W_n^{(0)}(\xi)=1$ where for any fixed $n$, $j$ runs from $1$ to $[n/3]$. 
\end{lemma}

\begin{proof}
Simple algebra. 
\end{proof}

\begin{proof}[Proof of Proposition~\ref{conj:2}] Item (i)  follows immediately from Lemma~\ref{lm:1}. 
While proving item (ii) of Proposition~\ref{conj:2}, we  will use  notation of Lemma 4. Notice that polynomials $U_n^{(\ell)}$, $V_n^{(\ell)}$, $W_n^{(\ell)}$ coincide up to the change of sign of the variable $\xi$ with  polynomials $u^{(\ell)},$ $v^{(\ell)}$ and $w^{(\ell)}$ respectively.

We need to  show that for any positive integer $n$, each polynomial in the recurrence~\eqref{eq:rec3} has positive coefficients and  real negative roots. Moreover  the roots of any two consecutive polynomials in each of the three sequences  $\{U_n^{(j)}\}_{j=0}^{[n/3]}$, $\{V_n^{(j)}\}_{j=0}^{[n/3]}$, $\{W_n^{(j)}\}_{j=0}^{[n/3]}$ are strictly interlacing and are therefore  simple. 

  Positivity of the coefficients of polynomials is straightforward from the positivity of coefficients in \eqref{eq:rec3} and the initial conditions. Let us settle the negativity of all roots by using induction on $j$. 
 For   $j=1$ and $n\ge 3$, it is  trivial  to check that the negative root of the $U_n^{(1)}$ is larger than that of $V_n^{(1)}$  which is larger than that of $W_n^{(1)}$. 
 
 Let us consider   the case $j=2$.   Note that $U_n^{(2)}=\xi  W_n^{(1)}+20(n-4)(n-3)  U_n^{(1)} $. 
Elementary calculations give  $\xi  W_n^{(1)}=\xi^{2}+(84-8n+20n^{2})\xi$ and $U_n^{(1)}=\xi-2n+2n^{2}$, which implies  that the two roots of $\xi  W_n^{(1)}$  are non-positive  and the only root of $U_n^{(1)}$ is located strictly between them for $n\geq3$.  
Hence, then 
using  Lemma 1.10 of \cite{F} or conducting elementary calculations, we can conclude that $U_n^{(2)}$ has simple negative roots for $n\ge 3$. Using  similar elementary calculations we obtain similar results for the pair $U_n^{(2)},V_n^{(1)}$ implying that $V_n^{(2)}$ has simple negative roots and for the pair $V_n^{(2)},W_n^{(1)}$  implying  that $W_n^{(2)}$ has simple negative roots. 

\smallskip
Assume now that our hypothesis holds for a given positive integer $j$. Note that, in each of the three equations in \eqref{eq:rec3}, the degree of the first polynomial in the right-hand side is bigger than the degree of the second polynomial by one. 
This indicates that the largest root belongs to the polynomial with the larger degree. Furthermore by Corollary 1.30 of \cite {F}, we can derive the following results:
\begin{equation}
 \xi W_{j-1}\leftarrow U_{j} \leftarrow U_{j-1}
 \end{equation}
 \begin{equation}
 U_{j}\leftarrow V_{j} \leftarrow V_{j-1}  
 \end{equation}
\begin{equation}
  V_{j}\leftarrow W_{j} \leftarrow W_{j-1}. 
\end{equation}
Here the arrow $"\leftarrow ``$ indicates that the corresponding pair of polynomials have simple interlacing roots with the largest root belonging to the polynomial at which  the arrow points.
\begin{enumerate}
 \item [(a)] Consider the recurrence 
$$U_n^{(j+1)}=\xi  W_n^{(j)}+(n-3j-1)(n-3j)(3j+1)(3j+2) U_n^{(j)}.$$ 
We can rewrite $\xi  W_n^{(j)}$ as $\xi  V_n^{(j)}+(n-3j)(n-3j+1)3j(3j+1)  \xi W_n^{(j-1)}.$
 Observe that  $\xi W_n^{(j-1)}\leftarrow U_n^{(j)}$ by induction hypothesis and Corollary 1.30 in \cite{F}. Using (2.2), we can conclude that $\xi V_n^{(j)}\leftarrow U_n^{(j)}$ since all roots of $U_n^{(j)}$ are negative and simple. Hence $\xi W_n^{(j)}\leftarrow U_n^{(j)}$, by Lemma 1.31 in \cite{F}.
Therefore $U_n^{(j+1)}$ has real and simple roots, by Lemma 1.10 in \cite{F}. 

\medskip
\item[(b)] Consider the recurrence 
$$V_n^{(j+1)}=U_n^{(j+1)}+(n-3j-2)(n-3j-1)(3j+2)(3j+3) V_n^{(j)}.$$
 One can rewrite $U_n^{(j+1)}$ as given in part $(a)$ above. By (2.2), $U_n^{(j)}\leftarrow V_n^{(j)}$ and by (2.3), $\xi  W_n^{(j)}\leftarrow V_n^{(j)}$ since $W_n^{(j)}$ and $V_n^{(j)}$ only have negative and simple roots. 
Proof is concluded by the same argument  as in part $(a)$.

\medskip
\item[(c)] Consider the recurrence  
$$W_n^{(j+1)}=V_n^{(j+1)}+(n-3j-3)(n-3j-2)(3j+3)(3j+4)  W_n^{(j)}.$$
 One can rewrite $V_n^{(j+1)}$ as in part $(b)$. By (2.3), $V_n^{(j)}\leftarrow W_n^{(j)}$. $U_n^{(j+1)}\leftarrow W_n^{(j)}$ by part $(a)$, Corollary 1.30 in \cite{F} and  the fact that both $U_n^{(j+1)}$ and $W_n^{(j)}$ have only negative and simple roots. 
Proof is concluded by the same argument  as in parts $(a)$ and $(b)$.
\end{enumerate}
\end{proof}
%

\medskip 
Let us now settle Theorem~\ref{th:main} in the special case $a_n=0;\; n=1,2,3,\dots$. 

\begin{proof}
In order to apply the approach of \cite{KvA}, we need  the variable recurrence coefficients to stabilize when $\frac{k}{n}\to \tau$, for any fixed $\tau\in [0,1]$.  To get such stabilization,  consider the rescaled matrix 
\begin{center}
\hspace{-2cm}
\begin{equation}
\frac{1}{n^{4/3}}M_n^{(0)} -\be I:= \left(
\begin{matrix}
-\be & 0 & \frac{2}{n^{4/3}} & 0 & 0 & \cdots & 0 \\
\frac{n}{n^{4/3}} & -\be & 0 & \frac{6}{n^{4/3}} & 0 & \cdots & 0 \\
0 & \frac{n-1}{n^{4/3}} & -\be & 0 & \frac{12}{n^{4/3}} & \cdots & 0 \\
\vdots & \vdots & \ddots & \ddots & \ddots & \ddots & \vdots \\
0 & 0 & \cdots & \frac{3}{n^{4/3}} & -\be & 0 & \frac{n(n-1)}{n^{4/3}} \\
0 & 0 & \cdots & 0 & \frac{2}{n^{4/3}} & -\be & 0 \\
0 & 0 & \cdots & 0 & 0 & \frac{1}{n^{4/3}} & -\be
\end{matrix}
\right)
\end{equation}\label{NNNNmatrix}
\end{center}
which is obtained from the matrix $M_n^{(0)} -\la I$ defined in \eqref{Nmatrix} dividing it by $n^{4/3}$ and setting $\be=n^{-4/3}\la$. Denote by $\widetilde \De_n^{(k)}(\be)$ the $k$-th principal minor of the above matrix.  Thus, we get $\widetilde \De_n^{(k)}(\be):=\De_n^{(k)}(n^{4/3}\be)/n^{4k/3}$.  The sequence $\{\widetilde \De_n^{(k)}(\be)\}_{k=1}^{n+1}$ satisfies the scaled recurrence 
\begin{equation}\label{recrelmod}
\widetilde \De_n^{(k)}(\be)=-\be\widetilde\De_n^{(k-1)}(\be)+\frac{(n-k+2)(n-k+3)(k-1)(k-2)}{n^4}\widetilde\De_n^{(k-3)}(\be),
\end{equation}
 obtained from \eqref{recrel} by substituting $\be=n^{-4/3}\la$.  In other words, \eqref{recrelmod} is satisfied by the characteristic polynomials of the principal minors of  the matrix $\frac{1}{n^{4/3}}M_n^{(0)}$.  
Then when $\frac{k}{n}\to\tau,$  the latter recurrence transforms  into the recurrence:
\begin{equation}\label{recrelstab}
\Omega_\tau^{(k)}(\be)=-\be\Omega_\tau^{(k-1)}(\be)+(1-\tau)^2\tau^2\Omega_\tau^{(k-3)}(\be) 
\end{equation}
 with constant coefficients. The polynomial $\Omega_\tau^{(k)}(\be)$ can be interpreted at the $k$-th principal minor of the infinite Toeplitz matrix
 \begin{center}
\hspace{-2cm}
\begin{equation}
 \left(
\begin{matrix}
-\be & 0 & \tau^2 & 0 & 0 & \cdots & 0 \\
(1-\tau) & -\be & 0 & \tau^2 & 0 & \cdots & 0 \\
0 & (1-\tau)& -\be & 0 & \tau^2 & \cdots & 0 \\
\vdots & \vdots & \ddots & \ddots & \ddots & \ddots & \vdots \\
0 & 0 & \cdots & (1-\tau) & -\be & 0 & \tau^2\\
0 & 0 & \cdots & 0 & (1-\tau) & -\be & 0 \\
0 & 0 & \cdots & 0 & 0 & (1-\tau) & -\be
\end{matrix}
\right).
\end{equation}\label{TTmatrix}
\end{center}

\medskip
Following \cite{KvA} and using Proposition~\ref{conj:2}, we conclude that the Cauchy transform 
  of the asymptotic root-counting measure for the polynomial sequence $$\left\{Sp_n(0,\be n^{4/3})\right\}=\left\{n^{\frac{4(n+1)}{3}}\cdot \widetilde\De_n^{(n+1)}(\be)\right\}$$ 
 is obtained by averaging the Cauchy transforms of the asymptotic distributions of \eqref{recrelstab} over $\tau\in [0,1]$.

\smallskip
In fact, in the case under consideration even the density of the former distribution can be obtained by averaging the densities of the latter family of distributions which we can confirm as follows. 

\smallskip
Recurrence \eqref{recrelstab} is similar to the one considered in the last section of \cite{BBSh} and has very nice asymptotic distribution of its roots, see Figure~\ref{figTau}. 
Observe that for any $\tau\in[0,1],$ the initial conditions for \eqref{recrelstab} are given by $$\Omega_\tau^{(-2)}(\be)=\Omega_\tau^{(-1)}(\be)=0,\; \Omega_\tau^{(0)}(\be)=1.$$ 
 
Since \eqref{recrelstab} has constant coefficients, the support of the asymptotic root-counting measure of its solution is described by  the well-known result of Beraha-Cahane-Weiss \cite{BKW2}. Namely, this support  coincides with the set of all $\be\in \bC$ such that among three solutions of the characteristic equation
\begin{equation}\label{char}
\Psi^3+\be \Psi^2-(1-\tau)^2\tau^2=0
\end{equation}
 with respect to the variable $\Psi$ two have  the same modulus  which is bigger or equal to   the modulus of the third solution of \eqref{char}. From considerations of \cite {BBSh} one can easily derive that, for any fixed $\tau\in [0,1],$ this support is the union of three intervals starting at the origin and ending at the branching points of \eqref{char}, i.e., those values of $\be$ and $\tau$ for which  \eqref{char} has a multiple root with respect to  $\Psi$.  The latter branching points  are given by the equation:
\begin{equation}\label{branch}
\be^3=\frac{27}{4}(1-\tau)^2\tau^2
\end{equation}
and their location for three values of $\tau$ is shown in Figure~\ref{figTau}.  Observe that if for $\tau\in [0,1]$,  we denote the branching point lying on the positive half-axis by $\be^+(\tau)$, then it attains its maximal value when $\tau=1/2$ and this maximum equals $\frac{3}{4}$.

As we mentioned before, in the case under consideration for any $\tau\in [0,1]$, the roots of all polynomials generated by \eqref{recrelstab} lie on three fixed rays through the origin. Therefore, the density of the asymptotic root-counting measure  of the polynomial sequence
 $$\left\{Sp_n(0,\be n^{4/3})\right\}=\left\{n^{{\frac{4(n+1)}{3}}}\widetilde\De_n^{(n+1)}(\be)\right\}$$  
 is obtained by averaging the densities  of \eqref{recrelstab} over $\tau\in [0,1]$, cf. \cite{CCvA} and \cite {KvA}. Therefore,  the support of the asymptotic root distribution for  $\{Sp_n(0,\be n^{4/3})\}$ in the $\be$-plane is the union of three intervals of length $\frac{3}{4}$. 
\end{proof}

\begin{figure}

\begin{center}
\includegraphics[width=0.25\textwidth]{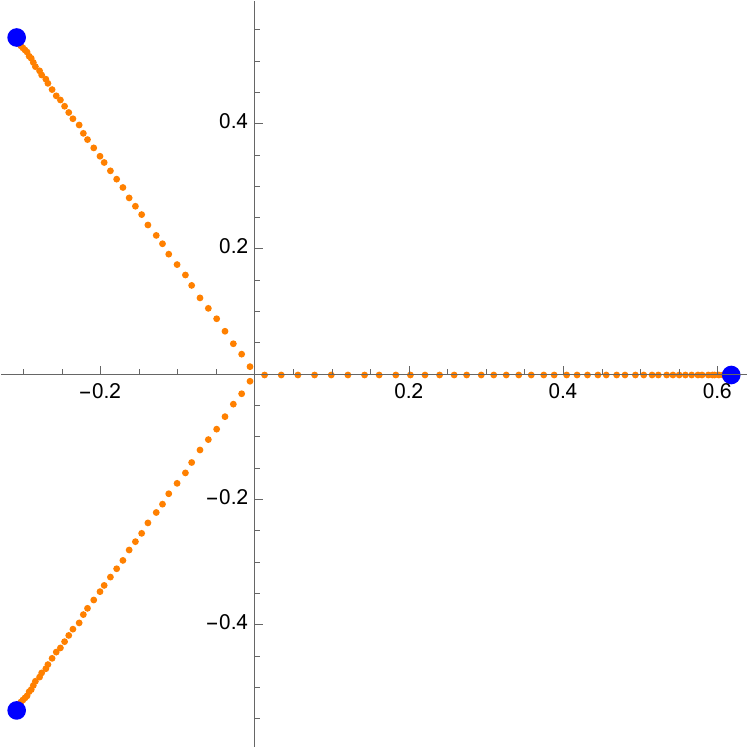} \quad \quad \includegraphics [scale=0.25]{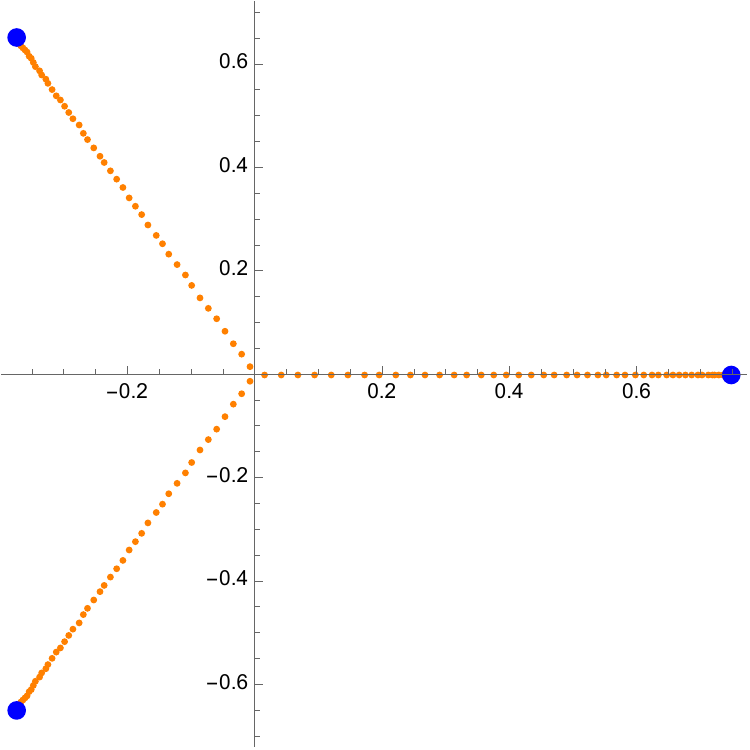}  \quad \quad \includegraphics [scale=0.25]{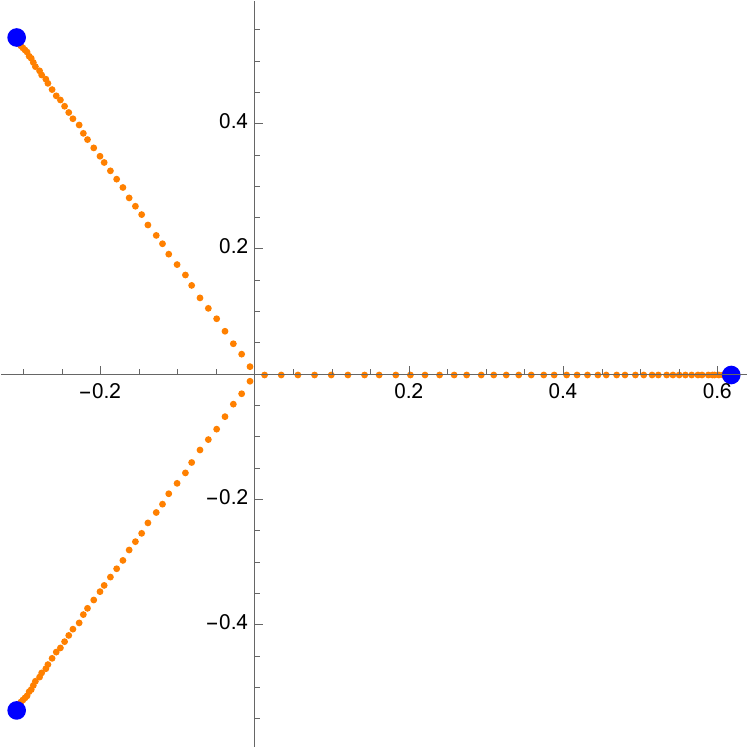}
\end{center}


\caption{Roots of $\Omega_\tau^{(150)}(\beta)$ for $\tau=1/4$, $\tau=1/2$,  $\tau=3/4$. Endpoints  of the segments are the branching points given by \eqref{branch} for the respective values of parameter $\tau$.}
\label{figTau}
\end{figure}

\begin{proof}[Proof of Theorem~\ref{th:main} in case  when $\lim_{n\to\infty}\frac{a_n}{n^{2/3}}=0$,] We need to show that Proposition~\ref{conj:2} holds asymptotically for any sequence $\{a_n\}$ of complex numbers satisfying the condition $$\lim_{n\to\infty}\frac{a_n}{n^{2/3}}=0.$$  

\smallskip
Indeed, the principal minors of the matrix  
\begin{center}
\begin{equation}\label{NNNmatrix}
M_n^{(a_n)} -\la I:= \left(
\begin{matrix}
-\la & a_n & 2 & 0 & 0 & \cdots & 0 \\
n & -\la & 2a_n & 6 & 0 & \cdots & 0 \\
0 & n-1 & -\la & 3a_n & 12 & \cdots & 0 \\
\vdots & \vdots & \ddots & \ddots & \ddots & \ddots & \vdots \\
0 & 0 & \cdots & 3 & -\la & (n-1)a_n & n(n-1) \\
0 & 0 & \cdots & 0 & 2 & -\la & na_n \\
0 & 0 & \cdots & 0 & 0 & 1 & -\la
\end{matrix}
\right),
\end{equation}
\end{center}
satisfy the recurrence 
$$\De_n^{(k)}(a_n,\la)=-\la\De_n^{(k-1)}(a_n,\la)+(k-1)(n-k+2)a_n \De_n^{(k-2)}(a_n,\la)$$ 
\begin{equation}\label{recrelN}
+(n-k+2)(n-k+3)(k-1)(k-2)\De_n^{(k-3)}(a_n,\la),
\end{equation}
where $k=1, 2, \dots, n+1,$ with the initial conditions 
$$\De_n^{(-2)}(a_n,\la)=\De_n^{(-1)}(a_n,\la)=0, \quad \De_n^{(0)}(a_n,\la)=1.$$

To obtain a converging sequence of   root-counting measures one has to consider  the scaled matrix 
$\frac{1}{n^{4/3}}M_n^{(a_n)} -\be I$ given by:
\begin{center}
\hspace{-2cm}
\begin{equation}\label{MMMmatrix}
 \left(
\begin{matrix}
-\be &  \frac{a_n}{n^{4/3}} & \frac{2}{n^{4/3}} & 0 & 0 & \cdots & 0 \\
\frac{n}{n^{4/3}} & -\be &  \frac{2a_n}{n^{4/3}} & \frac{6}{n^{4/3}} & 0 & \cdots & 0 \\
0 & \frac{n-1}{n^{4/3}} & -\be &  \frac{3a_n}{n^{4/3}} & \frac{12}{n^{4/3}} & \cdots & 0 \\
\vdots & \vdots & \ddots & \ddots & \ddots & \ddots & \vdots \\
0 & 0 & \cdots & \frac{3}{n^{4/3}} & -\be &  \frac{(n-1)a_n}{n^{4/3}} & \frac{n(n-1)}{n^{4/3}} \\
0 & 0 & \cdots & 0 & \frac{2}{n^{4/3}} & -\be &  \frac{na_n}{n^{4/3}} \\
0 & 0 & \cdots & 0 & 0 & \frac{1}{n^{4/3}} & -\be
\end{matrix}
\right).
\end{equation}
\end{center}
It is obtained by dividing  \eqref{NNNmatrix} by $n^{4/3}$ and setting $\be=\la/ n^{4/3}$. Its principal minors satisfy the recurrence 
$$
\widetilde \De_n^{(k)}(a_n,\be)=-\be\widetilde\De_n^{(k-1)}(a_n,\be)-\frac{(k-1)(n-k+2)a_n}{n^{8/3}}\widetilde\De_n^{(k-2)}(a_n,\be)
$$
\begin{equation}\label{recrelmodN}
+\frac{(n-k+2)(n-k+3)(k-1)(k-2)}{n^4}\widetilde\De_n^{(k-3)}(a_n,\be). 
\end{equation}
 
\medskip
Since $\lim_{n\to\infty}\frac{a_n}{n^{2/3}}=0$, then for $\frac{k}{n}\to \tau,$ the family \eqref{recrelmodN}  of recurrence relations converges to the earlier family \eqref{recrelstab} 
corresponding to  the case $a_n=0, \; n=1,2,3,\dots$. Therefore the measure obtained by averaging the family of root-counting measures for  recurrence relations with constant coefficients is exactly the same as in the previous case $a_n=0$, i.e., it coincides with $\nu_0$. Additionally observe that by a general result of \cite{KvA}, the support of the asymptotic root-counting measure of the sequence 
 $\{\widetilde \De_n^{(k)}(a_n,\be)\}$ can only be smaller than that of $\nu_0$ and their Cauchy transforms must coincide outside the support of $\nu_0$.  Since the support of $\nu_0$ is  the union of three straight intervals  through the origin  the resulting asymptotic root-counting measure for the sequence $\{Sp_n(a_n,\be n^{4/3})\}$ in case   $\lim_{n\to\infty}\frac{a_n}{n^{2/3}}=0$ coincides with $\nu_0$ as well.
\end{proof} 

\begin{remark}
{\rm Since the support of the limiting measure $\nu_0$ consists of three segments through the origin it is in principle possible  to find integral formulas for the density and the Cauchy transform of $\nu_0$ similar to those presented in \cite{KvA}, \cite{CCvA} and \cite {ShTa}. In particular,  in the complement to the support of $\nu_0$, its Cauchy transform is given by
$$\C_{\nu_0}(\be)=\int_0^1\frac{\partial}{\partial \beta} \left(\log \tilde \Psi \right) {d\tau} ,$$
where $\tilde\Psi$ is the unique solution of \eqref{char} satisfying  $\lim_{\be\to \infty}\frac{\tilde\Psi}{\be}=-1$.  
However it seems difficult to find either a somewhat explicit expression for $\C_{\nu_0}(\be)$ or a linear differential operator with polynomial coefficients annihilating $\C_{\nu_0}(\be)$.   
   (Observe that such an operator always exists since $\C_{\nu_0}(\be)$ belongs to  the Nilsson class, see \cite{Nil}). 
}
\end{remark}

\section {Case  $\lim_{n\to\infty}\frac{a_n}{n^{2/3}}=A\neq 0$.}\label{sec:phys}

\subsection {``Proof" of Proposition~\ref{quart}}\hfill\\ 

Similarly to the previous section, let $\De^{(k)}_n(a_n,\la)$ be the characteristic polynomial of the $k$-th principal minors of $M_n^{(a_n)}$, see \eqref{matrix}.  As in the previous section let us start with a special sequence $\{a_n=A n^{2/3}\}$ for some fixed $A\neq 0$. Next consider the matrix  

\begin{center}
\hspace{-2cm}
\begin{equation}\label{NAmatrix}
M_n^{(An^{2/3})} -\la I:= \left(
\begin{matrix}
-\la & An^{2/3} & 2 & 0 & 0 & \cdots & 0 \\
n & -\la & 2An^{2/3} & 6 & 0 & \cdots & 0 \\
0 & n-1 & -\la & 3An^{2/3} & 12 & \cdots & 0 \\
\vdots & \vdots & \ddots & \ddots & \ddots & \ddots & \vdots \\
0 & 0 & \cdots & 3 & -\la & (n-1)An^{2/3} & n(n-1) \\
0 & 0 & \cdots & 0 & 2 & -\la & nA n^{2/3} \\
0 & 0 & \cdots & 0 & 0 & 1 & -\la
\end{matrix}
\right).
\end{equation}
\end{center}
and denote its $k$-th principal minor by $\De_n^{(k)}(An^{2/3},\la)$. This sequence of minors satisfies the recurrence relation of length $4$ of the form:
$$
\De_n^{(k)}(An^{2/3},\la)=-\la\De_n^{(k-1)}(An^{2/3},\la)-(k-1)(n-k+2)An^{2/3}\De^{(k-2)}_n(An^{2/3},\la)$$
$$+(n-k+2)(n-k+3)(k-1)(k-2)\De_n^{(k-3)}(An^{2/3},\la).$$

To get  stabilization similar to that of \S~\ref{sec:2},  introduce the scaled matrix 
\begin{center}
\hspace{-2cm}
\begin{equation}\label{NNmatrix}
\frac{1}{n^{4/3}}M_n^{(An^{2/3})} -\be I:= \left(
\begin{matrix}
-\be & \frac{A}{n^{2/3}} & \frac{2}{n^{4/3}} & 0 & 0 & \cdots & 0 \\
\frac{n}{n^{4/3}} & -\be & \frac{2A}{n^{2/3}} & \frac{6}{n^{4/3}} & 0 & \cdots & 0 \\
0 & \frac{n-1}{n^{4/3}} & -\be & \frac{3A}{n^{2/3}} & \frac{12}{n^{4/3}} & \cdots & 0 \\
\vdots & \vdots & \ddots & \ddots & \ddots & \ddots & \vdots \\
0 & 0 & \cdots & \frac{3}{n^{4/3}} & -\be & \frac{(n-1)A}{n^{2/3}} & \frac{n(n-1)}{n^{4/3}} \\
0 & 0 & \cdots & 0 & \frac{2}{n^{4/3}} & -\be & \frac{nA}{n^{2/3}} \\
0 & 0 & \cdots & 0 & 0 & \frac{1}{n^{4/3}} & -\be
\end{matrix}
\right)
\end{equation}
\end{center}
where $\be= \la n^{-4/3}$.

Its principal minors (which we  denote by $\widetilde\De_n^{(k)}(An^{2/3},\be)$) satisfy the relation
\begin{equation}\label{recrelmodA}
\widetilde\De_n^{(k)}(An^{2/3},\be)=-\be\widetilde\De_n^{(k-1)}(An^{2/3},\be)-\frac{(k-1)(n-k+2)A}{n^{2}}\widetilde\De_n^{(k-2)}(An^{2/3},\be)$$
 $$+\frac{(n-k+2)(n-k+3)(k-1)(k-2)}{n^4}\widetilde\De_n^{(k-3)}(An^{2/3},\be).
\end{equation}

Assuming that $\frac{k}{n}\to \tau,$ we obtain that \eqref{recrelmodA} tends to the following relation with constant coefficients:
\begin{equation}\label{recrelstabA}
\Omega_\tau^{(k)}(A,\be)=-\be\Omega_\tau^{(k-1)}(A,\be)-A\tau(1-\tau)\Omega_\tau^{(k-2)}(A,\be)+(1-\tau)^2\tau^2\Omega_\tau^{(k-3)}(A,\be),
\end{equation}
whose characteristic equation is given by
\begin{equation}\label{characA}
\Psi^3+\be \Psi^2+A\tau(1-\tau)\Psi-(1-\tau)^2\tau^2=0.
\end{equation}

The polynomial $\Omega_\tau^{(k)}(A,\be)$ can be interpreted at the $k$-th principal minor of infinite Toeplitz matrix
 \begin{center}
\hspace{-2cm}
\begin{equation}
 \left(
\begin{matrix}
-\be & A\tau & \tau^2 & 0 & 0 & \cdots & 0 \\
(1-\tau) & -\be & A\tau & \tau^2 & 0 & \cdots & 0 \\
0 & (1-\tau)& -\be & A \tau& \tau^2 & \cdots & 0 \\
\vdots & \vdots & \ddots & \ddots & \ddots & \ddots & \vdots \\
0 & 0 & \cdots & (1-\tau) & -\be & A\tau & \tau^2\\
0 & 0 & \cdots & 0 & (1-\tau) & -\be &A\tau \\
0 & 0 & \cdots & 0 & 0 & (1-\tau) & -\be
\end{matrix}
\right).
\end{equation}\label{TTmatrix}
\end{center}

For a generic complex $A$,  the union over all $\tau\in [0,1]$  of the supports of the asymptotic root-counting measures $\mu_A^{(\tau)}$ for the polynomial sequences $\{\Omega_\tau^{(k)}(A,\be)\}_{k=1}^\infty$ (depending on $\tau\in [0,1]$)  is  strictly larger than that of $\nu_A$, see Figure~\ref{figA1}. In this case we can only conclude that the corresponding Cauchy transforms of both measures coincide with each other in the complement to the larger support.  (This fact follows from the complex version of the main result of \cite{KvA} corresponded to the first author by A.~Kuijlaars, see below).   The latter circumstance implies that the measure 
$$\mathfrak M_A=\int_{0}^1 \mu_A^{(\tau)}d\tau$$
is obtained as the balayage of measure $\nu_A$. (The above mentioned complex version  of the main result of \cite{KvA} is worked out in details in \S~\ref{sec:appen}.)
\begin{figure}

\begin{center}
\includegraphics [scale=0.4]{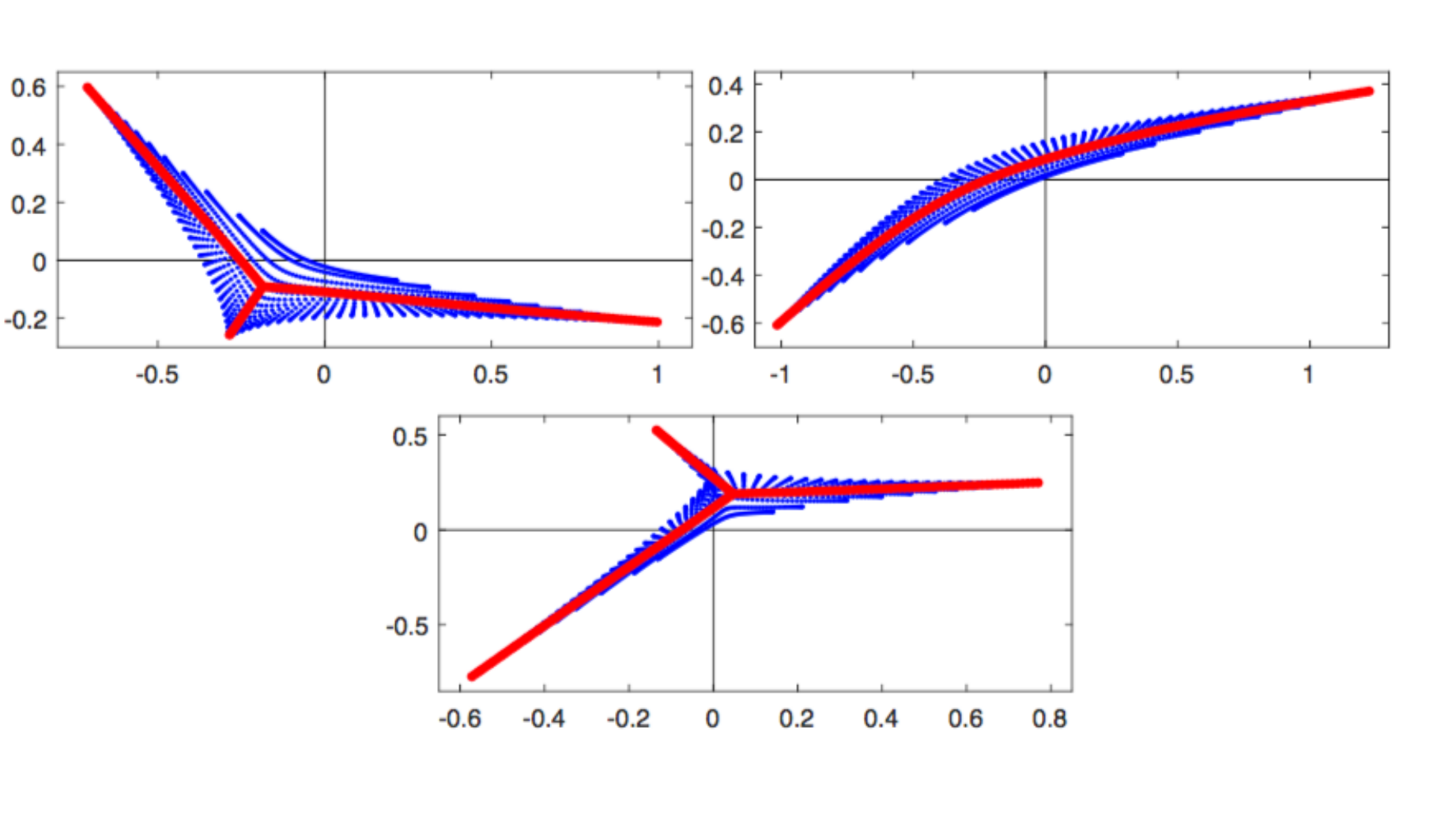}
\end{center}


\caption{The blue domain is (an approximation to) the union over $\tau\in [0,1]$ of the supports for  the asymptotic root-counting measures for the polynomial sequences $\{\Omega_\tau^{(k)}(A,\be)\}$ defined by  \eqref{recrelstabA}.  The red curve is the zero locus of  $Sp_{200}(An^{2/3},\be n^{4/3})$. We use   $A=(1-i)/2$ (upper left), $A=1+i$ (upper right), and $A=i/2$ (bottom). }
\label{figA1}
\end{figure}

\begin{figure}

\begin{center}
 \includegraphics [scale=0.4]{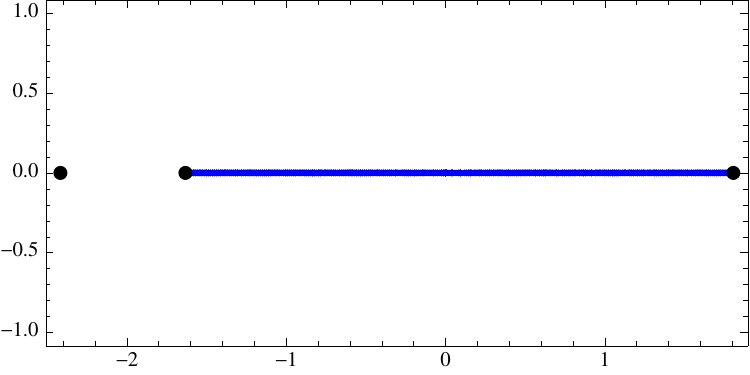} 

\end{center}


\caption{The union over $\tau\in [0,1]$ of the supports of the asymptotic root-counting measures for the sequences   $\{\Omega_\tau^{(k)}(A,\be)\}$  in case $A=3$.  Black dots are the three branching points given by  \eqref{endss}. }
\label{figA=3}
\end{figure}

\begin{figure}

\begin{center}
  \includegraphics [scale=0.3]{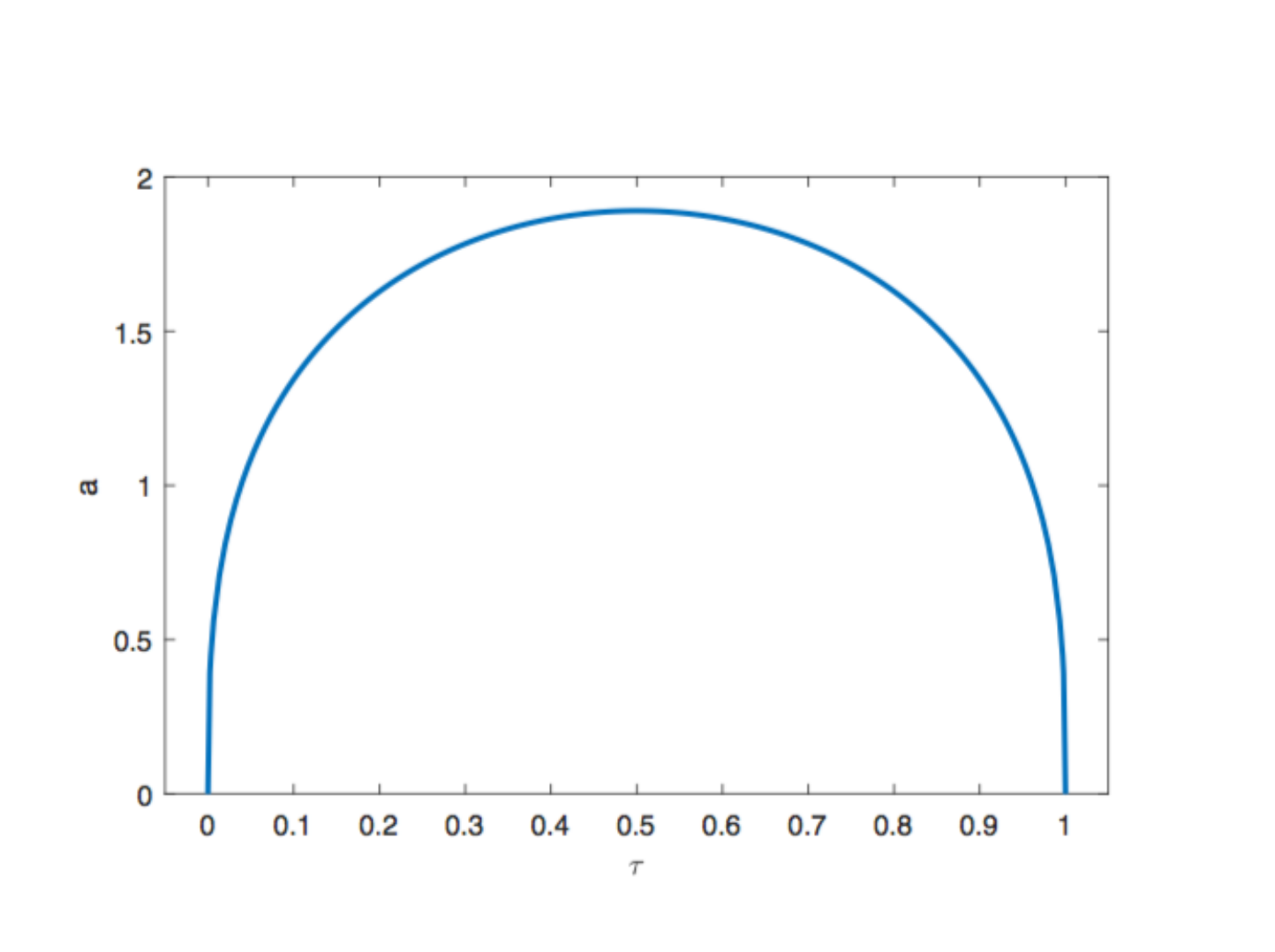} 

\end{center}


\caption{Plot of curve $A^3-27\tau+27\tau^2=0$ in the real $(A,\tau)$-plane.  $A$ is the vertical and $\tau$ is the horizontal axes.}
\label{figAtau}
\end{figure}

\medskip
However, the next Lemma shows that in case $A\ge \frac{3}{\root {3} \of  {4}}$ (which fits   the situation covered by the main result of \cite{CCvA},) these supports coincide and one can obtain the density of $\nu_A$ by averaging the densities of \eqref{recrelstabA}.

\begin{lemma}\label{apositive}
If $A\ge  \frac{3}{\root {3} \of  {4}},$ then   for any $\tau\in [0,1],$ the support of the asymptotic root-counting measure given by the polynomial sequence defined by \eqref{recrelstabA} is a real interval. This interval connects two  branching points defined by \eqref{eq:brA}.
\end{lemma}

\begin{proof} Using the standard expression for the discriminant,   one can check that all three branching points of \eqref{characA} with respect to $\Psi$ satisfy the  equation: 
\begin{equation}\label{eq:brA}
 4\be^3  + A^2 \be^2- 18 A \be \tau(1 -  \tau)  +\tau(1- \tau)(27\tau^2- 27 \tau-4 A^3 )=0. 
\end{equation}
To check that for $A \ge  \frac{3}{\root {3} \of  {4}},$ and any $\tau\in [0,1],$ all three solutions of \eqref{eq:brA} are real, we calculate the discriminant of \eqref{eq:brA} with respect to $\be$. Again using symbolic manipulations, we get that this discriminant is given by:
$$Dsc:=16 \tau(1-\tau)(A^3-27\tau+27\tau^2)^3.$$
For $\tau\in[0,1],$ the graph of  $Dsc$ in the real $(A,\tau)$-plane is presented in Figure~\ref{figAtau}. One can easily check that the maximal value of $A$ on this graph  is obtained when $\tau=1/2$ and is equal to $\frac{3}{\root {3} \of  {4}}$. This fact implies that $\frac{3}{\root {3} \of  {4}}$ is  the largest real value of $A$ for which roots of \eqref{eq:brA} w.r.t. $\beta$ can become multiple  for some choice of $\tau\in[0,1]$. Moreover checking the location of these roots  for some  value of $A> \frac{3}{\root {3} \of  {4}}$  (for example, for $A=3$ shown in Figure~\ref{figA=3}) we can conclude that all three roots of \eqref{eq:brA} are real for any $\tau \in [0,1]$. The latter circumstance  together with the reality of the situation imply that supports of the asymptotic root counting measure is real for any $\tau\in [0,1]$.  
\end{proof}

\begin{corollary} \label{cor:si} The maximal absolute value of points in $\Si_n$ grows like $\frac{3}{\root {3} \of  {4}} n^{2/3}.$ 
\end{corollary} 

\begin{remark} {\rm The special value $A=  \frac{3}{\root {3} \of  {4}}$ corresponds to the real corner of the asymptotic limiting domain $\mathcal F$ appearing in Conjecture~\ref{conj:Yab}. 
} 
\end{remark} 

\begin{remark} {\rm One can additionally show that for $A\ge  \frac{3}{\root {3} \of  {4}},$ the union of the supports (which is the union of all real intervals described in Lemma~\ref{apositive} coincides with the interval bounded by the  two rightmost branching points given by \eqref{endss}, see  Fig~\ref{figA=3}.} 
\end{remark}

\begin{figure}

\begin{center}
\includegraphics [scale=0.4]{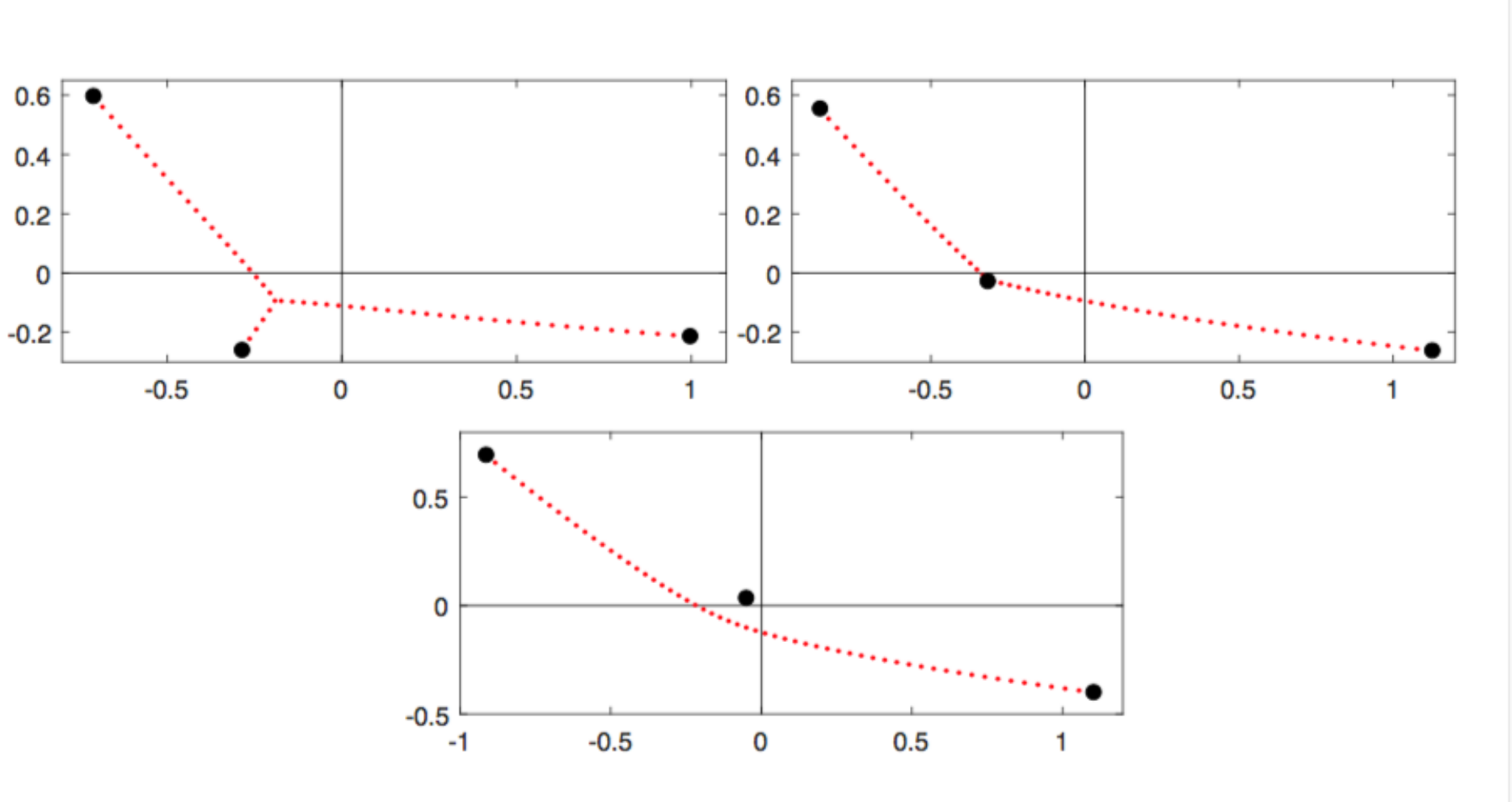}

\end{center}


\caption{Root distributions of $Sp_{200}(An^{2/3},\be n^{4/3})$  in the $\be$-plane for   $A=(1-i)/2$  (left), $A=4/5-2i/3$ (right)  and  $A=2/3-i\;$ (down). Larger dots are the endpoints of the support given by \eqref{endss}. }
\label{figA}
\end{figure}

\begin{remark}
{\rm Similarly to the case $A=0,$ for any $A\ge  \frac{3}{\root {3} \of  {4}}$,  one can represent the Cauchy transform of $\nu_A$ as 
$$\C_{\nu_A}(\be)=\int_0^1\frac{\partial}{\partial \be} \left( \tilde\Psi \right) d \tau ,$$
 in the complement to the support of $\nu_A$ (which is an interval explicitly given in Lemma~\ref{apositive}). 
 Here $\tilde\Psi$ is the unique solution of \eqref{characA} satisfying the condition $$\lim_{\be\to \infty}\frac{\tilde\Psi}{\be}=-1.$$
(In an appropriate domain in $\bC$  such presentation for the Cauchy transform of $\nu_A$ is valid for any complex $A$.)  
}
\end{remark}  

Let us now discuss Proposition~\ref{quart}. 

\begin{proof}["Proof`` of Proposition~\ref{quart} under additional convergence assumptions] \hfill\\
 
To obtain the support of the limiting measure $\nu_A$ whose existence is claimed in the Proposition,  we argue as follows. Assume that we have a (sub)sequence $\be_{j_n,n}$ of the eigenvalues of  the sequence of matrices $\{n^{-4/3} M_n^{(An^{2/3})}\}$ (one eigenvalue for each $n$) converging to some finite limit which we denote by $\La$. Denote by $\{p_n\}$ the corresponding (sub)sequence of eigenpolynomials of (the sequence of) differential operators $\{T_n\}$, see \eqref{eq:oper} in the Introduction.  For each $T_n$ the value of its parameter $a_n$ equals $An^{2/3}$. 
  Then each eigenpolynomial $p_n$ satisfies its own  differential equation of the form 
 \begin{equation}\label{eq:BLA}
 p_n^{\prime\prime}-(x^2-An^{2/3})p^\prime_n+(nx-\be_{j_n,n} n^{4/3})p_n=0.
 \end{equation}
 
\noindent
{\bf First  assumption.} We assume that if the subsequence  $\{\be_{j_n,n}\}$ has a finite limit  $\La$,  then,  after appropriate scaling of $x$, 
the sequence $\{\mu_n\}$ of the root-counting measures of $\{p_n\}$   converges in the weak sense  to some  limiting measure $\kappa_{a,\La}$ whose support consists of finitely many compact curves and points.

\medskip
This assumption implies that the sequence of appropriately scaled Cauchy transforms of  $\{\mu_n\}$  converges to the Cauchy transform of the limiting measure $\kappa_{a,\La}$. The appropriate scaling of $x$ which might provide such a convergence can be easy  guessed from \eqref{eq:BLA}. Namely, substituting $x=\Theta n^{1/3}$ and  dividing the above equation by $n^{4/3}p_n$, we get  the  relation  
\begin{equation}\label{eq:necess}
\frac{\frac{d^2p_n}{d\Theta^2}} {n^2p_n} - (\Theta^2-A) \frac {\frac{dp_n}{d\Theta}}{np_n}+(\Theta-\be_{j_n,n})=0
\end{equation}
with respect to the new independent variable $\Theta$.  Observe that the scaled logarithmic derivative $\frac{1}{n} \frac{d}{d\Theta} (\log p_n)$ is the Cauchy transform of the root-counting measure of the polynomial $p_n(\Theta n^{1/3})$ with respect to the new variable $\Theta$.

\medskip
\noindent
{\bf Second  assumption.}  Assuming that the sequence $\{\mu_n\}$ of the root-counting measures of $\{p_n(\Theta n^{1/3})\}$ converges to  $\kappa_{a,\La},$ we additionally assume that the sequences of the root-counting measures of  its first and second derivatives  converge to the same  measure $\kappa_{a,\La}.$ 

 (Apparently this assumption  can be rigorously proved by using the same arguments  as presented in \cite{BBSh}.) 

\medskip
Under the above two main assumptions and  using \eqref{eq:necess},  we get that the Cauchy transform $\C_{A,\La}$ of  $\kappa_{A,\La}$ satisfies the quadratic equation: 
\begin{equation}\label{eq:Cauchy}
\C_{A,\La}^2-(\Theta^2-A)\C_{A,\La}+(\Theta-\La)=0
\end{equation}
almost everywhere in $\bC$.  Up to the variable change $x\leftrightarrow \Theta$ and $\La\leftrightarrow \be$ the latter equation coincides with \eqref{quadreq}.

\medskip
\noindent
{\bf Third  assumption.} So far we presented a physics-style argument showing that if a   (sub)sequence $\{\be_{j_n,n}\}$ of the eigenvalues of  the sequence of matrices $\{n^{-4/3} M_n^{(An^{2/3})}\}$ converges to some limit $\La,$ then  there exists a probability measure  $\kappa_{A,\La}$ whose Cauchy transform satisfies \eqref{eq:Cauchy} almost everywhere in the $\Theta$-plane. Our final assumption is that the converse to the latter assumption  is true as well, i.e., for each $\La$ with the above properties there exists an appropriate subsequence $\{\be_{j_n,n}\}$ of eigenvalues  of matrices $\{n^{-4/3} M_n^{(An^{2/3})}\}$ converging to $\La$.  

\medskip
Simultaneous application of the above three assumptions settles Proposition~\ref{quart} in the case of the special sequence $a_n=An^{2/3}$, $n=1,2,3, \dots$.   The argument for the general case $\lim_{n\to \infty} a_n=An^{2/3}$ repeats verbatim the one used in the proof of Theorem~\ref{th:main}.  \end{proof}

\subsection{Quadratic equations with polynomial coefficients and quadratic differentials}\hfill \\

To finish this section let us present some additional results about quadratic differentials and signed measures. The next result is a special case of Proposition~9 and Theorem~12 of \cite {BoShN}.

\begin{proposition}\label{lm:horiz} There exists a signed measure $\mu_{A,\La}$ whose Cauchy transform satisfies \eqref{eq:Cauchy} almost everywhere if and only if the set of critical horizontal trajectories of the quadratic differential $$-((\Theta^2-A)^2-4(\Theta-\La))d\Theta^2$$ contains all its turning points, i.e., all roots of $P(\Theta,\La)=(\Theta^2-A)^2-4(\Theta-\La)$. (Here by a critical horizontal trajectory of a quadratic differential we mean its horizontal trajectory which starts and ends at the turning points.) 
\end{proposition}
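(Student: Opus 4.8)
The plan is to obtain the statement by specializing the general dictionary of \cite{BoShN} between Cauchy transforms of signed measures annihilated by a quadratic equation with polynomial coefficients and the horizontal foliation of the associated quadratic differential. Write \eqref{eq:Cauchy} as $\C^2+P_1(\Theta)\,\C+P_0(\Theta)=0$ with $P_1=-(\Theta^2-a)$ and $P_0=\Theta-\La$; its discriminant is $D(\Theta):=P_1^2-4P_0=(\Theta^2-a)^2-4(\Theta-\La)=P(\Theta,\La)$, and the quadratic differential in the statement is exactly $-D(\Theta)\,d\Theta^2$, i.e.\ \eqref{quadquart} up to the renaming $\be\leftrightarrow\La$, $x\leftrightarrow\Theta$. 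The two formal solutions are $\C_\pm=\tfrac12\bigl(-P_1\pm\sqrt D\bigr)$, and since $\deg D=4$ with leading coefficient $1$ while $D-(\Theta^2-a)^2=-4\Theta+4\La$ has lower degree, exactly one determination of $\sqrt D$ produces the branch $\C_-=\tfrac12\bigl(\Theta^2-a-\sqrt D\bigr)=\Theta^{-1}+O(\Theta^{-2})$ near infinity; any compactly supported signed measure $\mu$ with $\C_\mu$ satisfying \eqref{eq:Cauchy} a.e.\ must therefore have total mass $1$ (consistently with Proposition~\ref{quart}) and, on the unbounded complementary component of its support, coincide with $\C_-$.

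For the ``only if'' direction I would study $\mathrm{supp}\,\mu$ away from the zeros of $D$. On a smooth subarc $\ga$ of the support the Plemelj jump of $\C_\mu$ is $\C_{\mu,+}-\C_{\mu,-}=\sqrt D$ (one determination), and reality of $\mu$ then forces $\sqrt{D(\Theta)}\,d\Theta\in i\mathbb R$ along $\ga$, equivalently $D(\Theta)\,d\Theta^2<0$, i.e.\ $-D(\Theta)\,d\Theta^2>0$: exactly the condition that $\ga$ be a horizontal trajectory of $-D\,d\Theta^2$. The local density $\tfrac{1}{2\pi}|\sqrt D|$ is finite and nonzero off the zeros of $D$, so by the structure theory of horizontal foliations (\cite{Str}, and the relevant part of \cite{BoShN}) $\mathrm{supp}\,\mu$ is a finite union of closures of horizontal trajectories whose only singular endpoints are zeros of $D$ — recurrent or infinite trajectories in the support being excluded. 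Finally, if some zero $\Theta_0$ of $D$ were not in $\mathrm{supp}\,\mu$, then $\C_\mu$ would be holomorphic and single-valued on a punctured neighbourhood of $\Theta_0$ while agreeing there pointwise with one of $\C_\pm$, which swap under a small loop around the branch point $\Theta_0$ — a contradiction. Hence every turning point lies on the support, i.e.\ on a critical horizontal trajectory.

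For the converse I would assume the critical horizontal trajectories of $-D\,d\Theta^2$ contain all zeros of $D$, let $K$ be their union (a finite graph with vertices at those zeros), pick on $\bC\setminus K$ the branch $\C_-$ normalized by $\C_-(\Theta)=\Theta^{-1}+O(\Theta^{-2})$ at infinity, and set $\mu:=\tfrac1\pi\,\partial_{\bar\Theta}\C_-$ in the distributional sense. Then $\mu$ is supported on $K$; across each trajectory arc of $K$ the jump of $\C_-$ equals $\pm\sqrt D$, which is purely imaginary precisely because the arc is a horizontal trajectory, so $\mu$ is a real (mass-one) signed measure; and $\C_-=\C_\mu$ solves \eqref{eq:Cauchy} a.e.\ because it is a branch of the algebraic function defined by that equation on the full-measure set $\bC\setminus K$. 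The step I expect to be the main obstacle is purely global: verifying that $\C_-$ (equivalently $\sqrt D$) is genuinely single-valued on $\bC\setminus K$. This uses that $K$ joins all four (generically simple) zeros of $D$, so that every loop in $\bC\setminus K$ encircles an even number of them, together with the fact that $\deg D=4$ is even, whence $\sqrt D$ has no branch point at $\infty$ and no cut running to infinity is needed; one must also confirm that $\mu$ is finite with the stated mass, which is immediate from $\deg D=4$ and the normalization. Since \cite{BoShN} carries out exactly this construction (Proposition~9 and Theorem~12) for arbitrary coefficients $P_1,P_0$, in the final write-up it suffices to check that $(P_1,P_0)=\bigl(-(\Theta^2-a),\,\Theta-\La\bigr)$ satisfies its hypotheses — in particular that $D=P_1^2-4P_0$ is not a perfect square, which holds because it contains the odd-degree term $-4\Theta$ — and to invoke those two results.
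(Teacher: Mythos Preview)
Your proposal is correct and matches the paper's approach exactly: the paper does not give an independent proof but simply states that the proposition is a special case of Proposition~9 and Theorem~12 of \cite{BoShN}, which is precisely what you conclude in your final paragraph after verifying the hypotheses. Your intermediate sketch of the Plemelj-jump/horizontal-trajectory mechanism is sound and more detailed than anything the paper provides, but it is supplementary rather than a different route.
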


 We know that the support of $\mu_{A,\La}$ should include all the branching points of \eqref{eq:Cauchy} and consists of the critical horizontal trajectories of \eqref{quadquart}.






\begin{lemma}\label{crval} The set of the critical values of the polynomial $P(\Theta,\La)=((\Theta^2-A)^2-4(\Theta-\La))$, i.e.,  the set of all $\La$ for which $P(\Theta, \La)$ has a double root with respect to  $\Theta$ is given by the equation: 
\begin{equation}\label{endss}
4\La^3+A^2\La^2-9A\La/2-A^3-27/16=0.
\end{equation}
\end{lemma}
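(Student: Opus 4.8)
The plan is to compute the discriminant (with respect to $\Theta$) of the quartic $P(\Theta,\La)=(\Theta^2-a)^2-4(\Theta-\La)=\Theta^4-2a\Theta^2-4\Theta+(a^2+4\La)$ and show that, after removing spurious factors, its vanishing locus is cut out by \eqref{endss}. Concretely, $P$ has a double root in $\Theta$ precisely when the system $P(\Theta,\La)=0$, $\partial_\Theta P(\Theta,\La)=0$ has a common solution; since $\partial_\Theta P=4\Theta^3-4a\Theta-4=4(\Theta^3-a\Theta-1)$, I would eliminate $\Theta$ from the pair $\{\Theta^4-2a\Theta^2-4\Theta+a^2+4\La,\ \Theta^3-a\Theta-1\}$ by taking the resultant $\mathrm{Res}_\Theta$. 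Using $\Theta^3=a\Theta+1$ to reduce $\Theta^4=a\Theta^2+\Theta$, the quartic collapses to $(a\Theta^2+\Theta)-2a\Theta^2-4\Theta+a^2+4\La=-a\Theta^2-3\Theta+a^2+4\La$, so the elimination reduces to finding when $-a\Theta^2-3\Theta+(a^2+4\La)=0$ and $\Theta^3-a\Theta-1=0$ share a root; i.e.\ to $\mathrm{Res}_\Theta(-a\Theta^2-3\Theta+a^2+4\La,\ \Theta^3-a\Theta-1)=0$.

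I would then expand this $2\times 3$ Sylvester resultant (a $5\times 5$ determinant, or more cheaply by reducing $\Theta^3$ modulo the quadratic) and check that it factors as a constant times $\bigl(4\La^3+a^2\La^2-\tfrac92 a\La-a^3-\tfrac{27}{16}\bigr)$, possibly up to an extra factor in $a$ alone coming from the leading coefficient $-a$ of the quadratic; that extra factor, if present, corresponds to the degenerate case $a=0$ and is discarded (or handled separately: for $a=0$ the quartic is $\Theta^4-4\Theta+4\La$ with discriminant proportional to $256\La^3-27\cdot 16$, matching \eqref{endss} at $a=0$, namely $4\La^3-27/16=0$). This identification of the resultant with the right-hand side of \eqref{endss} is a finite symbolic computation; the cubic dependence on $\La$ is forced a priori since $\deg_\Theta P=4$ makes the discriminant have $\La$-degree $3$.

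The only real obstacle is bookkeeping: carrying out the resultant cleanly and pinning down the overall constant and any spurious $a$-factor. One clean way to finish is to verify the claimed equation \eqref{endss} directly at the double root: at a double root $\Theta_0$ one has simultaneously $\Theta_0^3-a\Theta_0-1=0$ and $a\Theta_0^2+3\Theta_0-a^2-4\La=0$; from the second relation $\La=\tfrac14(a\Theta_0^2+3\Theta_0-a^2)$, and substituting this parametrization into $4\La^3+a^2\La^2-\tfrac92 a\La-a^3-\tfrac{27}{16}$ and reducing modulo $\Theta_0^3-a\Theta_0-1$ should yield identically $0$. Running the substitution the other way (expressing everything in $\Theta_0$ and using $\Theta_0^3=a\Theta_0+1$ repeatedly to lower the degree below $3$) reduces the verification to checking that a polynomial of degree $\le 2$ in $\Theta_0$ with coefficients in $a$ vanishes identically, which is immediate. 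Hence the critical-value locus is exactly \eqref{endss}, as claimed.
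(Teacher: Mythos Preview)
Your proposal is correct and amounts to exactly what the paper means by ``straight-forward calculations'': the paper gives no details beyond that phrase, and your plan---eliminate $\Theta$ from $P=0$ and $\partial_\Theta P=4(\Theta^3-a\Theta-1)=0$, either via the Sylvester resultant or by parametrizing $\La=\tfrac14(a\Theta_0^2+3\Theta_0-a^2)$ and reducing modulo $\Theta_0^3-a\Theta_0-1$---does indeed yield (up to the nonzero constant factor $4096$) the cubic $4\La^3+a^2\La^2-\tfrac{9}{2}a\La-a^3-\tfrac{27}{16}$, with no spurious $a$-factor appearing. The degree count you mention (the discriminant of a quartic with $r=a^2+4\La$ is cubic in $\La$ with nonzero leading coefficient $256\cdot 4^3$) closes the argument in both directions.
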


\begin{proof}
Straight-forward calculations. 
\end{proof}

\begin{corollary}\label{endpoints}
The endpoints of the support of $\nu_A$ are contained among the three roots of  equation \eqref{eq:brA} 
  when   $\tau=1/2$, see Figure~\ref{figA}. This equation  coincides with \eqref{endss} where $\La$ is substituted by $\be$. 
\end{corollary}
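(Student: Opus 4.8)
The plan is to combine two previously established facts: first, Corollary~\ref{endpoints} tells us (by the physics argument of Proposition~\ref{quart}) that the endpoints of the support of $\nu_a$ are roots of \eqref{eq:brA} with $\tau=1/2$; second, Lemma~\ref{crval} computes the critical values of $P(\Theta,\La)$. So the statement reduces to a bare algebraic identity: specialize \eqref{eq:brA} at $\tau=1/2$ and check that it is, up to renaming $\be\leftrightarrow\La$, exactly \eqref{endss}.

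First I would substitute $\tau=1/2$ into
$$4\be^3 + a^2\be^2 - 18a\be\,\tau(1-\tau) + \tau(1-\tau)\bigl(27\tau^2-27\tau-4a^3\bigr)=0.$$
Since $\tau(1-\tau)=1/4$ and $27\tau^2-27\tau = 27/4 - 27/2 = -27/4$, the middle term becomes $-18a\be/4 = -9a\be/2$ and the last term becomes $\tfrac14(-27/4 - 4a^3) = -27/16 - a^3$. Hence \eqref{eq:brA} at $\tau=1/2$ reads $4\be^3 + a^2\be^2 - 9a\be/2 - a^3 - 27/16 = 0$, which is literally \eqref{endss} with $\be$ in place of $\La$. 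This proves the last sentence of the corollary.

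For the first sentence I would simply cite Corollary~\ref{endpoints}'s own derivation: the support of $\nu_a$ is, by Proposition~\ref{lm:horiz} and the remark following it, a union of critical horizontal trajectories of \eqref{quadquart}, and its endpoints are necessarily turning points of that quadratic differential, i.e.\ double roots of $P(\Theta,\La)$ in $\Theta$. By Lemma~\ref{crval} the set of such $\La$ is exactly the zero locus of \eqref{endss}, and by the identity just established this is the same as the $\tau=1/2$ slice of \eqref{eq:brA}; the branching points of \eqref{characA} at $\tau=1/2$ are the outermost among all $\tau\in[0,1]$ (as observed for $a=0$ via \eqref{branch}, and consistently with Lemma~\ref{apositive}), so the endpoints of the support lie among precisely these three roots.

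The only genuine obstacle is bookkeeping: making sure the sign conventions in \eqref{eq:brA} (derived as the $\be$-discriminant locus of \eqref{characA}) and in \eqref{endss} (derived as the $\Theta$-discriminant locus of $(\Theta^2-a)^2-4(\Theta-\La)$) are compatible, and that the variable identification $x\leftrightarrow\Theta$, $\La\leftrightarrow\be$ from the proof of Proposition~\ref{quart} is applied consistently. Once that is checked, everything is ``straightforward calculations'' exactly as the proof of Lemma~\ref{crval} asserts, so I would keep the written proof to the one-line substitution $\tau=1/2$ plus a pointer to Corollary~\ref{endpoints} and Lemma~\ref{crval}.
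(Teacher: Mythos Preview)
Your proposal is correct and matches the paper's (implicit) approach: the corollary is stated without proof immediately after Lemma~\ref{crval}, and the intended argument is precisely the one you give---use Proposition~\ref{lm:horiz} and the surrounding discussion to identify the endpoints of $\mathrm{supp}\,\nu_a$ with critical values of $P(\Theta,\La)$, invoke Lemma~\ref{crval} to get \eqref{endss}, and then verify by direct substitution that \eqref{eq:brA} at $\tau=1/2$ coincides with \eqref{endss}. One cosmetic point: your opening sentence cites ``Corollary~\ref{endpoints}'' as an input to its own proof, which reads circularly; you clearly mean the material \emph{preceding} the corollary (Proposition~\ref{quart}, Proposition~\ref{lm:horiz}, Lemma~\ref{crval}), so rephrase accordingly.
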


\section{On branching points and monodromy of the spectrum}


Observe that, for any positive integer $n$ and generic values of parameter $a$, the roots of  $Sp_n(a,\la)$ with respect to $\la$ 
   are simple. The latter roots are called the \emph{quasy-exactly solvable spectrum} of the quartic oscillator under consideration. 
   
   Moreover, for  any given $n$, and any 
sufficiently large positive $a$, these roots are real and distinct.  The set $\Si_n \subset \bC$ of branching points of $Sp_n(a,\la)$, i.e., the set of all values of $a$ for which two eigenvalues 
coalesce, has cardinality $\binom {n+1}{2}$.  When plotted these branching points form a regular pattern
in the complex plane shown in  Figures~\ref{BBigtriangle} and \ref{triangle10}. 

\medskip
In this subsection we present our (mostly) numerical results and conjectures about the monodromy of the roots of $Sp_n(a,\la)$ when $a$ runs along different closed paths  in the complement to $\Si_n$ in the $a$-plane.  We start with the following statement.

\begin{figure}[H]

\begin{center}
\includegraphics[scale=0.45]{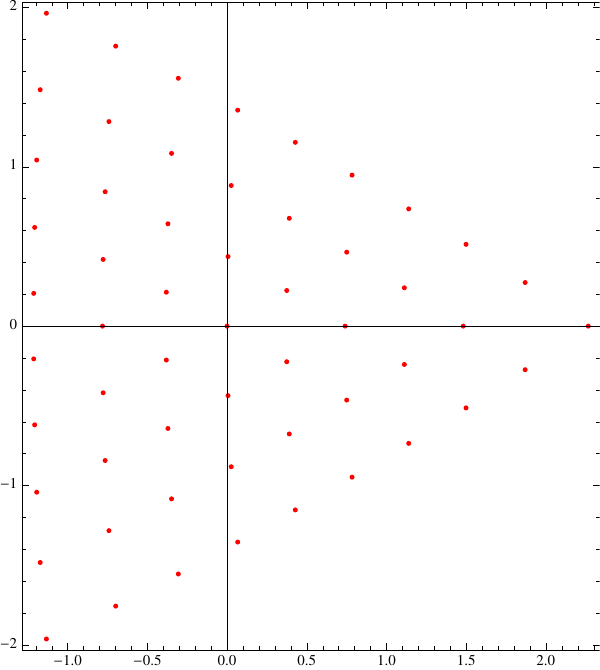}
\end{center}

\caption{The curvilinear triangle of the branching points for $Sp_{10}(a,\la)$.}

\label{triangle10}
\end{figure}

\begin{figure}

\begin{center}
\includegraphics[scale=0.8]{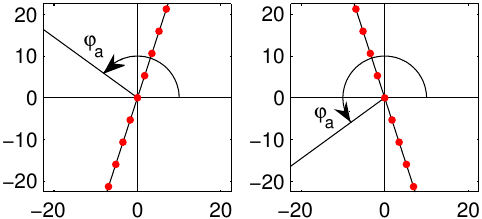}
\end{center}

\caption{Roots of $Sp_8(a,8^{4/3}\be)$. The left figure shows
the situation with $a=500\exp(i\varphi_a), \varphi_a=4\pi/5$, the
right one $a=500\exp(i\varphi_a), \varphi_a=6\pi/5$. In both cases roots are almost uniformly distributed on the interval $[-\sqrt{a}, \sqrt{a}]$.}

\label{figslopes}
\end{figure}

\begin{proposition}\label{prop:asymp} For any given $n$, if $|a|\to \infty$ with $\arg a=\phi$ fixed, then the roots of $Sp_n(a, \la)$  divided by $n^{4/3}$  will be asymptotically uniformly distributed on the straight segment 
$[-\sqrt{a},+\sqrt{a}]$, see Figure~\ref{figslopes}. In particular, if $a$ traverses the circle $R e^{2\pi i t},\; t\in [0,1]$ for any sufficiently large $R,$ then the resulting monodromy of roots of $Sp_n(R,\la )$ (which are all real) is the complete reversing of their order, i.e., the leftmost and the rightmost roots change places, the second from the left and the second from the right change places  etc. 
\end{proposition}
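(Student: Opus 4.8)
The plan is to analyze the eigenvalue problem for $M_n^{(a)}$ in the regime $|a|\to\infty$ with $\arg a=\phi$ fixed, reducing it to a one-dimensional Sturm-Liouville-type asymptotic analysis. First I would recall that the eigenpolynomials $p$ of degree $n$ satisfy the degenerate Heun equation \eqref{eq:degHeun}, $p''-(x^2-a)p'+(nx-\la)p=0$. For large $|a|$ one expects the first-order term to dominate and the natural scaling is $x=\sqrt{a}\,\Theta$ together with $\la=n^{4/3}\be$ (or, more to the point here, $\la\sim n\sqrt{a}\,\be'$ for the leading balance); substituting and dividing by the appropriate power, the equation degenerates to a relation where the Cauchy transform $\C$ of the (scaled) root-counting measure of $p$ satisfies, in the $|a|\to\infty$ limit, essentially $-(\Theta^2-1)\C + (\text{linear in }\Theta) = 0$, whence $\C$ has no branch cut off the real $\Theta$-interval $[-1,1]$ and the density is that of the arcsine/uniform pushforward — i.e. the roots equidistribute on $[-\sqrt a,+\sqrt a]$ after division by $n^{4/3}$. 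I would make this rigorous either by the transfer-matrix/recurrence method of \cite{KvA} applied to the three-term-plus recurrence \eqref{recrelN} with the coefficient $(k-1)(n-k+2)a$ now dominant, or by a direct WKB/Stieltjes electrostatic argument: the limiting measure minimizes the appropriate energy in the external field coming from the $(x^2-a)p'$ term, and for $a$ large the field forces the support onto the segment between the two turning points $\pm\sqrt a$ of $(x^2-a)^2$, with uniform density because the relevant quadratic differential $-((\Theta^2-1)^2 + o(1))d\Theta^2$ degenerates to $-(\Theta^2-1)^2 d\Theta^2$ whose critical horizontal trajectory is exactly $[-1,1]$ traversed with constant speed in the natural parameter.

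For the monodromy statement I would argue as follows. For $a=R$ real and large, Proposition (the first paragraph) — or the classical fact that \eqref{eq:degHeun} with $a>0$ large is a small perturbation of a self-adjoint problem — gives that all $n+1$ roots $\la_1(R)<\la_2(R)<\cdots<\la_{n+1}(R)$ are real, simple, and (after scaling) nearly equispaced on $[-\sqrt R,\sqrt R]$. Now let $a=Re^{2\pi i t}$, $t\in[0,1]$, with $R$ fixed and so large that the circle avoids $\Si_n$ (possible since $\Si_n$ is finite and, by the Remark, its points have modulus $O(n^{2/3})$, hence bounded for fixed $n$). The roots $\la_j(a)$ vary continuously and the whole configuration, to leading order, is the scaled segment $[-\sqrt a,\sqrt a]$ with $\sqrt a = \sqrt R\, e^{\pi i t}$. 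Thus the segment rotates by $\pi$ as $t$ goes from $0$ to $1$: the endpoint $+\sqrt R$ travels to $-\sqrt R$ and vice versa, so the leading-order positions of the roots are reversed. Since for fixed $n$ the roots stay uniformly bounded away from each other along the circle (no collisions, as the circle misses $\Si_n$) and the limiting configuration is an affine image of a fixed point set, each root must be carried to the root sitting at the mirror-image leading position; hence the induced permutation is $j\mapsto n+2-j$, the complete order reversal.

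The main obstacle is the step asserting that for fixed (not large) $n$ the leading-order picture genuinely controls the combinatorics of which root goes where — i.e. passing from "the scaled configuration rotates by $\pi$" to "the permutation is exactly the reversal." The asymptotic distribution results are statements about $n\to\infty$, whereas here $n$ is fixed and $|a|\to\infty$; one needs a genuinely $|a|$-asymptotic (not $n$-asymptotic) expansion of the individual roots $\la_j(a)$, showing $\la_j(a)/\sqrt a \to c_j$ for distinct constants $c_j\in(-1,1)$ with $c_j$ increasing in $j$, uniformly as $a$ winds around the circle. I would obtain this by a perturbative analysis of the matrix $M_n^{(a)}$ for large $|a|$: after conjugating by the diagonal scaling $\mathrm{diag}(1,a,a^2,\dots)$ or similar, $M_n^{(a)}$ becomes $\sqrt a\,(A + O(|a|^{-1}))$ where $A$ is a fixed tridiagonal matrix (the "$x^2-a$ part") whose eigenvalues are the distinct reals $\sqrt a\, c_j$; standard analytic perturbation theory then gives analytic continuation of each eigenvalue along the circle and identifies the monodromy with the permutation of $\{\sqrt R\,c_j\}$ induced by multiplying by $e^{\pi i}$, which is precisely the reversal. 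The bookkeeping of this conjugation and the identification of $A$ with an explicit Jacobi matrix whose spectrum is real and simple is the one genuinely technical point, but it is routine linear algebra once set up.
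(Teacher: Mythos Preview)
Your third paragraph is the right approach and is essentially what the paper does; the measure-theoretic arguments of your first paragraph (Cauchy transforms, \cite{KvA}, electrostatics) are, as you yourself diagnose, the wrong tool here because $n$ is fixed and only $|a|\to\infty$. The paper makes your perturbative idea concrete as follows. Rather than conjugating, it compares recurrences for principal minors: after dividing the spectral variable by $\sqrt a$ (set $\ga=\be/\sqrt a$), the three-term part of recurrence \eqref{recrelN} has $a$-independent coefficients while the fourth term carries an extra factor $a^{-3/2}$ and hence becomes coefficientwise arbitrarily small. Thus $Sp_n$ is, after this scaling, uniformly close to the characteristic polynomial $\widehat{Sp}_n(\la,a)$ of the \emph{tridiagonal} matrix obtained from $M_n^{(a)}$ by deleting the second superdiagonal. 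Two clean observations finish the job: (i) $\widehat{Sp}_n(\la,a)$ is quasihomogeneous of weight $1$ in $\la$ and weight $2$ in $a$, so its $\la$-roots for general $a$ are $\sqrt a$ times its roots at $a=1$; and (ii) the tridiagonal matrix at $a=1$ (subdiagonal $n,n-1,\dots,1$, superdiagonal $1,2,\dots,n$, suitably normalized) has the explicit equispaced spectrum $-1,-1+\tfrac{2}{n},\dots,1-\tfrac{2}{n},1$ --- this is your ``fixed Jacobi matrix $A$ with real simple spectrum,'' identified exactly. The monodromy claim then follows as you argue: the roots of $\widehat{Sp}_n$ rotate rigidly with $\sqrt a$, so one lap sends each root to its negative, i.e.\ $\la_j\mapsto\la_{n+2-j}$, and the perturbation is too small to alter the permutation.

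Two small corrections. First, the limiting configuration is genuinely equispaced (uniform), not arcsine --- for fixed $n$ one gets $n+1$ evenly placed points, not a density. Second, your specific conjugation $\mathrm{diag}(1,a,a^2,\dots)$ does not produce the form $\sqrt a\,(A+o(1))$; the right balance comes from something like $\mathrm{diag}(1,a^{-1/2},a^{-1},\dots)$, but the paper sidesteps this entirely by comparing recurrences rather than matrices, which is cleaner.
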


\begin{proof}[Proof of Proposition~\ref{prop:asymp}] Observe that if $a=K e^{2\pi i \phi}$ with $K$ very large then the polynomial $\widetilde {Sp}_n(a,\be):={n^{-\frac{4(n+1)}{3}}} Sp_n(a,\be n^{4/3})$ is coefficient-wise close to $\widehat {Sp}_n(a,\be)$,  where  
$\widehat {Sp}_n(a,\be)$ is the characteristic polynomial of the tridiagonal matrix
\begin{equation}\label{newmatrix}
\widehat M_n^{(a)} := \left(
\begin{matrix}
0 & a/n & 0 & 0 & 0 & \cdots & 0 \\
n/n & 0 & 2a/n & 0 & 0 & \cdots & 0 \\
0 & (n-1)/n & 0 & 3a/n & 0 & \cdots & 0 \\
\vdots & \vdots & \ddots & \ddots & \ddots & \ddots & \vdots \\
0 & 0 & \cdots & 3/n & 0 & (n-1)a/n & 0 \\
0 & 0 & \cdots & 0 & 2/n & 0 & na/n \\
0 & 0 & \cdots & 0 & 0 & 1/n & 0
\end{matrix}
\right).
\end{equation}
(Observe that the above matrix $\widehat M_n^{(a)}$ is tridiagonal and not $4$-diagonal as the previous matrices!)  
To make the situation  more transparent, let us consider the sequence of characteristic polynomials of $\frac{1}{\sqrt{a}}\widetilde M_n^{(a)}$ and of $\frac{1}{\sqrt{a}}\widehat M_n^{(a)},$  where 
\begin{equation}\label{newaddmatrix}
\widetilde M_n^{(a)} := \left(
\begin{matrix}
0 & a/n & 2/n & 0 & 0 & \cdots & 0 \\
n/n & 0 & 2a/n & 5/n & 0 & \cdots & 0 \\
0 & (n-1)/n & 0 & 3a/n & 12/n & \cdots & 0 \\
\vdots & \vdots & \ddots & \ddots & \ddots & \ddots & \vdots \\
0 & 0 & \cdots & 3/n & 0 & (n-1)a/n & n(n-1)/n \\
0 & 0 & \cdots & 0 & 2/n & 0 & na/n \\
0 & 0 & \cdots & 0 & 0 & 1/n & 0
\end{matrix}
\right).
\end{equation}

In other words, we are comparing the roots of $\widetilde {Sp}_n(a,\be)$ divided by $\sqrt{a}$ with that of  $\widehat {Sp}_n(a,\be)$ divided by $\sqrt{a}$. The characteristic polynomials of the respective principal minors of $\frac{1}{\sqrt{a}}\widetilde M_n^{(a)}$ and of $\frac{1}{\sqrt{a}}\widehat M_n^{(a)}$
 satisfy the recurrences:
 \begin{equation}\label{recrelmodAA}
\widetilde\De_n^{(k)}(\ga)=-\ga\widetilde\De_n^{(k-1)}(\ga)-\frac{(k-1)(n-k+2)}{n^{8/3}}\widetilde\De_n^{(k-2)}(\ga)$$
 $$+\frac{(n-k+2)(n-k+3)(k-1)(k-2)}{a^{3/2}n^4}\widetilde\De_n^{(k-3)}(\ga),
\end{equation}
and 
 \begin{equation}\label{recrelmodAAA}
\widehat\De_n^{(k)}(\ga)=-\ga\widehat \De_n^{(k-1)}(\ga)-\frac{(k-1)(n-k+2)}{n^{8/3}}\widehat\De_n^{(k-2)}(\ga), 
\end{equation}
where $\ga=\be/\sqrt{a}$ and both recurrences have the standard boundary conditions: $\widetilde\De_n^{(-1)}(\ga)=\widehat\De_n^{(-1)}(\ga)=0,\; \widetilde\De_n^{(0)}(\ga)=\widehat\De_n^{(0)}(\ga)=1$. As before $\widetilde {Sp}_n(a,\be)=\widetilde\De_n^{(n)}(\ga)$ and $\widehat {Sp}_n(a,\be)=\widehat\De_n^{(n)}(\ga)$. Observe now that, for any fixed $n$ and any $\eps >0$, one can choose $|a|$ so large that each equation in \eqref{recrelmodAA} for $k=1,2,\dots, n$ deviates from the corresponding equation in \eqref{recrelmodAAA} so little that $\widetilde\De_n^{(n)}(\ga)-\widehat\De_n^{(n)}(\ga)$ can be made  coefficientwise arbitrary small.  
 (This can be done due to the presence of $a^{3/2}$ in the denominator of the third term in \eqref{recrelmodAA}).
 
 Now one can easily check by induction that bivariate polynomial $\widehat {Sp}_n(a,\be)$  is quasihomogeneous with weight $1$ for  variable $\be$ and weight $2$ for variable $a$.  When $a$ is positive, then $\widehat {Sp}_n(a,\be)$ is a real-rooted polynomial in $\be$. Since multiplication of $\la$ by $e^{\pi i t}$ and multiplication of $a$ by $e^{2\pi i t}$ multiplies the whole $\widehat {Sp}_n(a,\be)$ by a constant, then for any fixed $a$, the roots of $\widehat {Sp}_n(a,\be)$ with respect to $\be$ lie  on the line through the origin whose slope is half of the slope of $a$. Now consider the roots of $\widehat {Sp}_n(a,\be)$ with respect to $\be$. Observe that if $a=1$, then the spectrum of \eqref{newmatrix} is $(-1, -1+\frac{2}{n}, -1+\frac{4}{n}, \dots, 1-\frac{4}{n}, 1-\frac{2}{n}, 1)$. The same argument as above gives that for any $a\neq 0$, the roots of  $\widehat {Sp}_n(a,\be)$ with respect to $\be$ will be equally spaced on the interval $[-\sqrt{a},\sqrt{a}]$.

Since choosing $|a|$ sufficiently large, we can achieve that all roots of $\widetilde Sp_n^{(n)}(a,\be)$ lie arbitrarily close to those of $\widehat Sp_n^{(n)}(a,\be)$, and since the latter roots are equally distributed on the interval $[e^{-2\pi i \phi},e^{-2\pi i \phi}],$ the result follows. 
 \end{proof}

\begin{figure} 

\begin{center}
\includegraphics[scale=0.25]{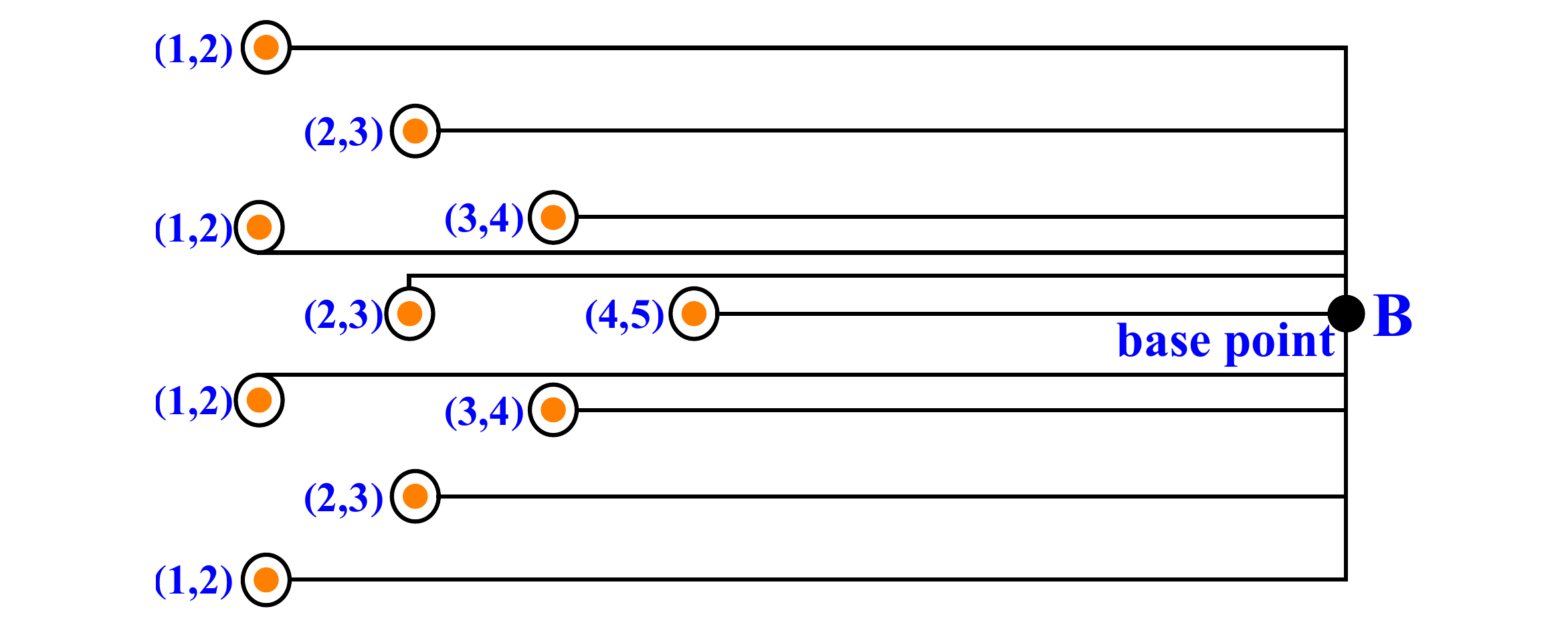}
\end{center}

\caption{The system of standard paths and the monodromy (transpositions) of the spectrum which they produce for $n=4$.}

\label{paths}
\end{figure}

To describe (our conjecture on) the monodromy  of the spectrum, let us introduce a system of standard paths connecting a base point  chosen as  a sufficiently large positive number with every branching point, see Fig.~\ref{paths}.  Based on our numerical experiments, we see that $\Si_n$ form a triangular shape  with points regularly arranged into columns and rows in $\bC$. There are $n$ columns (enumerated from left to right) where the $j$-th column consists of $n-j$ branching points with approximately the same real part and there are $n$ rows (enumerated from bottom to top) where the $i$-th row consists of points   with approximately the same imaginary part.  We denote the branching points $\si_{i,j}\in \Si_n$ where $i=1,\dots, 2n-1$ is the row number and $j=1,\dots, n$ is the column number.  

Fixing a base point $B$ as a sufficiently large positive number,  connect $B$ with every $\si_{i,j}$ by a "vertical hook`` $\mathcal P_{i,j}$, i.e., move from $B$ vertically to the imaginary part of $\si_{i,j}$, then move horizontally to the left untill you almost hit $\si_{i,j}$, then circumgo $\si_{i,j}$ counterclockwise along a small circle centered at $\si_{i,j}$ and return back to $B$ along the same path. Conjecturally, along such a path one will never hit any other branching points unless $\si_{i,j}$ lies on the real axis. In other words,  the imaginary parts of all branching points except for the real ones are all distinct. In case when $\si_{i,j}$ is real one can slightly deform the suggested path (which is a real interval) in an arbitrary way to move it away from the real axis. The resulting monodromy will (conjecturally) be independent of any such small deformation, see below. Finally we can state our surprisingly simple guess. 

\begin{conjecture}\label{conj:monod}
For any $\si_{i,j}\in \Si_n$ and any sufficiently large positive base point $B$,  the monodromy corresponding to the standard path $\mathcal P_{i,j}$ is a simple transposition $(j,j+1)$ of the  roots 
of $Sp_n(B,\la)$ ordered from left to right. (Recall that by our choice of $B$ all roots of $Sp_n(B,\la)$  are real and therefore naturally ordered.)
\end{conjecture}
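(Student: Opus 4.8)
I would first observe that, since the paper's degree count gives $\deg_a Dscr(Sp_n(a,\la))=\binom{n+1}{2}$ while the cardinality of $\Si_n$ is also $\binom{n+1}{2}$, the discriminant has only simple zeros. Hence at each $\si_{i,j}$ the polynomial $Sp_n(\si_{i,j},\cdot)$ has exactly one double root $\la_0$ and $n-1$ simple ones, the two colliding roots having a Puiseux expansion $\la=\la_0\pm c\,(a-\si_{i,j})^{1/2}+\cdots$; so a small counterclockwise loop about $\si_{i,j}$ transposes them and fixes the other $n-1$, and the monodromy along $\mathcal P_{i,j}$ is the transposition $(p,q)$, where $\la_p$ and $\la_q$ are the two eigenvalues that come together at $\si_{i,j}$ after being continued from the base point $B$ along the non-looping part of $\mathcal P_{i,j}$. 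It thus remains to show $\{p,q\}=\{j,j+1\}$ in the left-to-right order of the (real) spectrum at $B$.

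\textbf{Step 2 (the outer part of the hook).} Next I would split $\mathcal P_{i,j}$ into the vertical segment from $B$ up to the height of $\si_{i,j}$, the horizontal segment running left to a point just to the right of $\si_{i,j}$, the small circle, and the return, and control the eigenvalues on everything except the inner end of the horizontal segment. There $|a|$ is large compared with the size of $\Si_n$, so the super-super-diagonal term of recurrence \eqref{recrelmodN} is, after the rescaling $\ga=\be/\sqrt a$ of Proposition~\ref{prop:asymp}, suppressed by $a^{-3/2}$; hence $Sp_n(\cdot,a)$ stays coefficient-wise close to $\widehat{Sp}_n(\cdot,a)$, whose roots are exactly $\sqrt a\cdot\{-1,-1+\frac2n,\dots,1\}$, and since $\arg a$ is small they form a near-real, equally spaced, simple configuration in the same left-to-right order as at $B$. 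So along this portion the continuation of $\la_k$ is simply ``the $k$-th point of the configuration'' and the strands never cross; the difficulty is entirely concentrated in the inner portion of the horizontal segment, where $|a|$ is comparable to the diameter of $\Si_n$ — exactly the regime of Proposition~\ref{quart} and of the quadratic differential \eqref{quadquart}.

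\textbf{Step 3 (identifying the colliding pair — the main obstacle).} I then have to show that of the $n+1$ eigenvalues of $M_n^{(a)}$ it is precisely the continuations of $\la_j$ and $\la_{j+1}$ that merge when $a$ reaches $\si_{i,j}$. My plan is to read off the conjectured triangular organisation of $\Si_n$ from a deformation of $M_n^{(a)}$ into the tridiagonal matrix of \eqref{newmatrix} (up to a harmless overall constant), obtained by scaling the super-super-diagonal entries $2,6,12,\dots$ by a parameter $s\in[0,1]$. At $s=0$ the $\la$-discriminant is $C\,a^{\binom{n+1}{2}}$, so there is a single branch point at $a=0$ of multiplicity $\binom{n+1}{2}$, over which the monodromy of the roots of $\widehat{Sp}_n(\cdot,a)$ is the full reversal $w_0\in S_{n+1}$ (one sees this directly: the roots $\sqrt a\cdot\{-1,-1+\frac2n,\dots,1\}$ form a rigid configuration that reflects through the origin as $a$ circles $0$); as $s$ grows this point resolves into $\binom{n+1}{2}$ simple branch points. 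The laborious step, which I expect to be the real obstacle, is to prove that in the natural ordering of the $n+1$ strands these emerging branch points realise the \emph{standard} reduced word $w_0=(12)\cdot(23)(12)\cdot(34)(23)(12)\cdots(n,n+1)\cdots(12)$, in which $(j,j+1)$ occurs $n+1-j$ times and its occurrences form the $j$-th column; I would then transport this combinatorial datum along $s\in[0,1]$, using that for $|a|$ large $Sp_n^{(s)}(\cdot,a)$ stays close to $\widehat{Sp}_n(\cdot,a)$ so the big-circle monodromy remains $w_0$ for every $s$, to conclude that it persists at $s=1$, i.e.\ for the actual $\Si_n$.

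\textbf{Comments.} The obvious consistency check — that the product of the $\binom{n+1}{2}$ transpositions equals $w_0$, whose Coxeter length is exactly $\binom{n+1}{2}$ — is necessary but \emph{not} sufficient, since $w_0$ admits factorisations of that length into non-adjacent transpositions already in $S_3$, so the local identification of Step 3 cannot be bypassed by pure group theory. A complementary, more analytic route would read the same combinatorics off the scaling limit of \S\,\ref{phys}: for $a$ on the boundary of the limiting triangle $\mathfrak F$ the support of $\nu_a$ is a single arc carrying a singular point (Conjecture~\ref{conj:Yab}(ii)), and the reconnection of the critical horizontal trajectories of \eqref{quadquart} as $a$ crosses such a boundary point should translate, after unscaling, into the coalescence of two consecutive eigenvalues of $M_n^{(a)}$; but extracting the finite-$n$ statement from the limit is itself a substantial task. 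For small $n$ the conjecture is, of course, confirmed by computing $Sp_n$, its discriminant, the $\si_{i,j}$, and the monodromy numerically — which is the evidence on which it rests.
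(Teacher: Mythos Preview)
This statement is labeled a \emph{Conjecture} in the paper and is not proved there; the only evidence the paper offers is numerical verification for $n\le 10$. So there is no paper proof to compare against --- you are sketching an approach to an open problem.

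Your Step~1 is fine in spirit, but note that the equality $|\Si_n|=\binom{n+1}{2}$ (equivalently, simplicity of all zeros of the discriminant) is asserted in the paper without proof and appears to be part of the empirical picture rather than an established fact; you are taking it as input. Step~2 is reasonable and uses Proposition~\ref{prop:asymp} correctly for the large-$|a|$ portion of the hook.

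The real content, as you yourself say, is Step~3, and here the proposal is a plan rather than a proof. Two concrete gaps. First, the resolution of the single degenerate branch point at $s=0$ into $\binom{n+1}{2}$ simple ones realising the \emph{particular} reduced word for $w_0$ is not established; different reduced words correspond to different braid factorisations of $w_0$, and the claim that the standard one emerges is essentially the conjecture restated in local form --- you have relocated the difficulty, not removed it. Second, even granting this for small $s>0$, transporting it to $s=1$ requires that no two branch points collide along the deformation, which you do not verify; a single collision can perform a Hurwitz move on the factorisation and destroy the column-by-column pattern. There is also a subtler mismatch: the assignment of a transposition to each branch point depends on the chosen system of paths, and the paths $\mathcal P_{i,j}$ are defined via the (conjectural) row/column geometry of $\Si_n$ at $s=1$, whereas at small $s$ all branch points sit near the origin and there is no such geometry --- so matching the two ends of the deformation needs additional geometric control that you have not supplied. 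In short, the strategy is interesting and correctly identifies where the difficulty lies, but as written it does not close the gap between the degenerate $s=0$ picture and the geometric statement at $s=1$.
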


This conjecture has been numerically checked  for all $n\le 10$. Observe that since the system of  standard paths gives a basis of the fundamental group $\pi_1(\bC\setminus \Si_n)$, then knowing the monodromy for the standard paths, one can  calculate the monodromy along any loop in $\bC\setminus \Si_n$ based at $B$.

\section{Appendix I. Estimates for largest roots of Yablonskii-Vorob'ev polynomials}\label{sec:YV}

In this  section we present both estimates  from above and from below as well as the asymptotic behaviour
for roots of maximal modulus of Yablonskii-Vorob'ev polynomials.

\smallskip
The Yablonskii-Vorob'ev polynomials satisfy the differential-difference relation
\begin{equation}
Q_{n+1}(t)=\frac{tQ_n(t)^2-4(Q_n(t)Q_n^{''}(t)-(Q_n^{'}(t))^2)}{Q_{n-1}(t)}
\end{equation}
with $Q_0(t)=1$ and $Q_1(t)=t$. The roots of $Q_n$ approximately cover  a 
triangular shape invariant under rotation by $2\pi/3$ whose edges are  curves instead of  straight lines. The whole pattern is
invariant under the dihedral group of symmetries of an equilateral triangle. The roots of maximal modulus lie on a circle with centre the origin.
These and other properties are listed in \cite{ClMa}.

Our strategy is to connect roots and coefficients of Yablonskii-Vorob'ev polynomials.
Similar approach was used in \cite{Ka}. Our results are sharper and we also derive the
asymptotic formula. As the first step we form rational functions
$$
\rho_n(t)=(\log(Q_n(t)))^{''}-t/4.
$$
Another relation between $\rho_n$ and $Q_n$ is
$$
\rho_{n+1}(t)=-Q_n(t)Q_{n+2}(t)/(4Q_{n+1}(t)^2).
$$
From this and from
$$
Q_n(t)=\prod_{k=1}^{n(n+1)/2}(t-\alpha_{n,k})=\sum_{k=0}^{n(n+1)/2}c_{n,k}t^k
$$
we see that the Laurent expansion at $\infty$ for $\rho_n$ is convergent in $|t|>A_{n+1}$,
where $A_n:=\max{|\alpha_{n,k}|}_{1\leq k\leq n(n+1)/2}$:
\begin{equation}
\rho_n(t)=-\frac{t}{4}-\sum_{j=0}^{\infty} (-1)^j \rho_{n,j} t^{-(3j+2)}.
\end{equation}
A closer inspection reveals that
\begin{equation}\label{psnj}
\frac{\rho_{n,j}}{3j+1}=(-1)^j s_{n,3j},
\end{equation}
where $s_{n,j}$ is the sum of $j$th powers of roots of $Q_n$, \ie
$$
s_{n,j}=\sum_{k=1}^{n(n+1)/2} \alpha_{n,k}^j.
$$

The second step is to find the relation between $s_{n,j}$ and $c_{n,k}$, \ie
\begin{equation}\label{snj}
s_{n,j}=-j c_{n,n(n+1)/2-j}-\sum_{i=1}^{j-1} c_{n,n(n+1)/2-i} s_{n,j-i}.
\end{equation}
To this end it is suitable to rewrite $Q_n(t)$ as
\begin{equation}
Q_n(t)=\sum_{k=0}^{[n(n+1)/6]+1} q_k(n) t^{(n^2+n-6k)/2}.
\end{equation}
We can find explicit form of $q_n$'s:
\begin{equation}\label{cfc}
\begin{aligned}
q_0(n)=&1 \\
q_1(n)=&(n+2)(n+1)n(n-1)/6 \\
q_2(n)=&(n+5)(n+3)(n+2)(n+1)n(n-1)(n-2)(n-4)/72 \\
q_3(n)=&(n^4+2n^3-57n^2-58n+1120)(n+4)(n+3)(n+2)(n+1) \\
 &n(n-1)(n-2)(n-3)/1296 \\
 &\cdots
\end{aligned}
\end{equation}
We get $s_{n,j}$ from \eqref{snj}. We see that $s_{n,j}\neq 0$ only if $j\equiv 0\, \mathrm{mod}\, 3$.
\begin{equation}\label{pfc}
\begin{aligned}
s_{n,3}=&-(n+2)(n+1)n(n-1)/2 \\
s_{n,6}=&2(n^2+n-5)(n+2)(n+1)n(n-1) \\
s_{n,9}=&-4(n^2+n-7)(3n^2+3n-20)(n+2)(n+1)n(n-1) \\
s_{n,12}=&8(11n^6+33n^5-259n^4-573n^3+2348n^2+2640n-7700) \\
 &(n+2)(n+1)n(n-1) \\
 &\cdots
\end{aligned}
\end{equation}
Now, we can limit $|A_n^{3k}|$.  Namely, 
\begin{equation}\label{up}
|A_n|\leq \sqrt[3k]{s_{n,3k}}, \qquad k=1,2,\ldots .
\end{equation}

On the other hand, it is clear that $s_{n,3k}$ does not exceed $n(n+1)|A_n^{3k}|/2$ and from \eqref{psnj} follows
\eqref{low}

\begin{equation}\label{low}
|A_n|\geq\sqrt[3k]{\frac{2\rho_{n,k}}{(3k+1)n(n+1)}}.
\end{equation}
\begin{figure}
\centering
\includegraphics[angle=0,height=7cm]{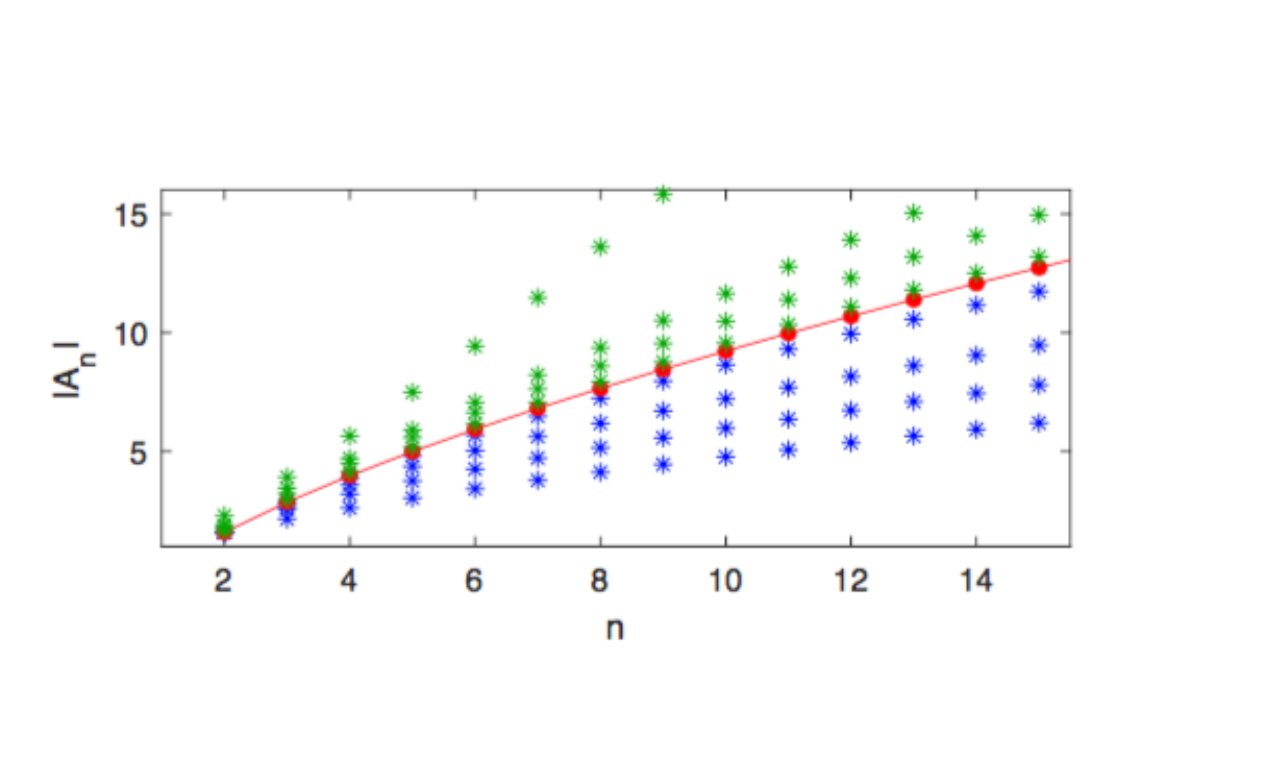}
\caption{Dependence of $|A_n|$ on $n$ (red dots) and the upper \eqref{up} ($k=1,2,3,10$)
(green asterisks) and lower \eqref{low} ($k=1,2,4,15$) (blue asterisks) limits.}
\label{approxfig}
\end{figure}
Again,
$$
\sqrt[3k]{\frac{2\rho_{n,k}}{(3k+1)n(n+1)}}\geq\sqrt[3(k-1)]{\frac{2\rho_{n,k-1}}{(3k-2)n(n+1)}},
$$
so that we have an infinite series of both upper and lower bounds on $|A_n|$. This makes
possible to find the asymptotic behaviour of $|A_n|$ when $n\rightarrow\infty$.

\smallskip
Conserving only the leading term of $s_{n,3k}$ we arrive at a sequence
$$
-\frac{n^4}{2},2 n^6,-12 n^8,88 n^{10},-728 n^{12},6528 n^{14},-62016 n^{16},615296 n^{18},-6314880 n^{20},\ldots
$$
The general form of this sequence is
$$
v_k = \frac{3 (-2)^k \Gamma (3 k)}{\Gamma (k) \Gamma (2 k+3)} n^{2 k+2}
$$
and 

$$
\lim_{k\rightarrow\infty} \sqrt[3k]{v_k}=\sqrt[3]{-1}\frac{3}{\sqrt[3]{2}}n^{2/3}.
$$
Besides, the factor $\sqrt[3]{-1}$ shows the direction in which the three roots of
maximal modulus move when $n\rightarrow\infty$.

Numerical results show the rate of convergence to the asymptotic value, cf.  table.

\begin{center}
  \begin{tabular}{ | r | c | c | }
    \hline
    $n$ & $|A_n|$ & $3n^{2/3}/\sqrt[3]{2}-|A_n|$ \\ \hline \hline
    10 & 9.226620959867741 & 1.82547 \\ \hline
    20 & 15.85575092198829 & 1.68836 \\ \hline
    30 & 21.37667838759338 & 1.61260 \\ \hline
    40 & 26.28885026360517 & 1.56068 \\ \hline
    50 & 30.79511630027872 & 1.52140 \\ \hline
    60 & 35.00327495653731 & 1.48994 \\ \hline
    70 & 38.97923077181361 & 1.46376 \\ \hline
    80 & 42.76697737664885 & 1.44140 \\ \hline
    90 & 46.39772099260680 & 1.42191 \\ \hline
    100 & 49.89460772047459 & 1.40467 \\ \hline
    110 & 53.27540358774959 & 1.38922 \\
    \hline
  \end{tabular}
\end{center}

\bigskip
\section{Appendix II. Complex generalization  of the main result of \cite{KvA}} \label{sec:appen} 

The main result of this section is Proposition~\ref{prop:complex} mentioned in \S~\ref{sec:phys} which undoubtedly has independent interest.  For the sake of completeness and for the convenience of our readers we include its proof below.  Here  
we adopt the notation used in paper \cite{KvA} by A.~Kuijlaars and W. Van Assche:
\[
 \lim_{n/N\to t}X_{n,N}=X
\]
which denotes that for the doubly indexed sequence it holds that
\[
 \lim_{j\to\infty}X_{n_j,N_j}=X,
\]
for any $\{n_j\}_{j\in\N},\ \{N_j\}_{j\in\N}\subset\N$ such that $N_j\to\infty$ and $n_j/N_j\to t$, as $j\to\infty$. Additionally, we occasionally add the meaning of the notion of a limit. 
For example, 
\[
 \wlim_{n/N\to t}\mu_{n,N}={\mu}
\]
expresses the limit of the double indexed sequence of measures $\mu_{n,N}$ converging to $\mu$ in the weak$^{\star}$ topology. Similarly,
\[
 \slim_{n/N\to t}A_{n,N}=A
\]
stands for the limit of the double indexed sequence of bounded operators $A_{n,N}$ converging to $A$ strongly.

Recall some well-known facts from the theory of linear operators. First, for any closed operator $T$ on a Banach space, one has
\[
 \|(T-z)^{-1}\|\geq\frac{1}{\dist(z,\sigma(T))}, \quad \forall z\in\rho(T).
\]
On the other hand, recall that for a bounded operator $B$ it holds
\[
 \|(B-z)^{-1}\|\leq\frac{1}{\dist(z,D_{\|B\|})}, \quad \forall z,\; |z|>\|B\|,
\]
where $D_{\|B\|}=\{z\in\bC \mid |z|\leq\|B\|\}$. Let us remark that $\sigma(B)\subset D_{\|B\|}$. 

Let $\{a_{n}\}_{n\geq1}, \{b_{n}\}_{n\geq0}\subset\bC$ and $a_{n}\neq0$, $\forall n\geq1$. Further $J_{n}$ denotes the $(n+1)\times(n+1)$
Jacobi matrix of the form
\[    
     J_{n}=\begin{pmatrix}
	b_0 & a_1\\
	a_1 & b_1 & a_2\\
	& a_2 & b_2 & a_3\\
	& & \ddots & \ddots & \ddots\\
	& & & a_{n-1} & b_{n-1} & a_n\\
	& & & & a_n & b_n
      \end{pmatrix}.
\]
Recall also that the polynomial sequence $\{p_n\}_{n\geq0}$ determined by recurrence
\[
 a_{n}p_{n-1}(x)+b_{n}p_{n}(x)+a_{n+1}p_{n+1}(x)=xp_{n}(x), \quad n\geq 0,
\]
and initial conditions $p_{-1}=0$ and $p_{0}=1$, is related with $J_{n}$ by formula
\[
 p_{n}(x)=\left(\prod_{k=1}^{n}\frac{1}{a_{k}}\right)\det\left(x-J_{n-1}\right), \quad n\geq1.
\]

\begin{lemma}{\cite[Lem. 2.2]{KvA}}\label{lem:KVlem}
 Let $\|J_{n}\|\leq M$, then
 \[
  \left|\frac{p_{n}(z)}{a_{n+1}p_{n+1}(z)}\right|\leq\frac{1}{\dist(z,D_{M})}=\frac{1}{|z|-M}, \quad \forall z\in\bC, |z|>M.
 \]
 On the other hand, one has
 \[
  \left|\frac{p_{n}(z)}{a_{n+1}p_{n+1}(z)}\right|\geq\frac{1}{2|z|} \quad \forall z\in\bC, |z|>3M.
 \]

\end{lemma}

\begin{proof}
 By simple linear algebra, 
 \[
  \frac{p_{n}(z)}{a_{n+1}p_{n+1}(z)}=\frac{\det\left(z-J_{n-1}\right)}{\det\left(z-J_{n}\right)}=\langle e_{n}, (z-J_n)^{-1} e_{n} \rangle
 \]
 (note that $e_{n}$ stands for the $(n+1)$-th vector of the standard basis of $\bC^{n+1}$).
 Consequently,
 \[
 \left|\frac{p_{n}(z)}{a_{n+1}p_{n+1}(z)}\right|=|\langle e_{n}, (z-J_n)^{-1} e_{n} \rangle|\leq\|(z-J_n)^{-1}\|\leq\frac{1}{\dist(z,D_{\|J_{n}\|})}\leq\frac{1}{\dist(z,D_{M})},
 \]
 whenever $|z|>M$.
 
 One the other hand,
 \[
  |\langle e_{n}, (z-J_n)^{-1} e_{n} \rangle|=\frac{1}{|z|}\left|1+\langle e_n,\sum_{k=1}^{\infty}z^{-k}J_{n}^{k}e_{n}\rangle\right|
  \geq\frac{1}{|z|}\left(1-\frac{\|J_{n}\|}{|z|-\|J_{n}\|}\right)\geq\frac{1}{2|z|},
 \]
 whenever $|z|\geq3M$.
\end{proof}

Observe that if $R_{n}\in\bC^{n+1,n+1}$ is the permutation matrix determined by equations $R_{n}e_{k}=e_{n-k}$, $\forall k\in\{0,1,\dots,n\}$, then
\[
 R_{n}J_{n}R_{n}=\begin{pmatrix}
	b_n & a_n\\
	a_n & b_{n-1} & a_{n-1}\\
	& a_{n-1} & b_{n-2} & a_{n-2}\\
	& & \ddots & \ddots & \ddots\\
	& & & a_{2} & b_{1} & a_1\\
	& & & & a_1 & b_0
      \end{pmatrix},
\]
$\det(J_{n}-z)=\det(R_{n}J_{n}R_{n}-z)$, and
\[
 \langle e_{n}, (J_{n}-z)^{-1} e_{n} \rangle = \langle e_{0}, (R_{n}J_{n}R_{n}-z)^{-1} e_{0} \rangle, \quad \forall z\in\rho(J_{n}).
\]
If convenient, we identify matrix $R_{n}$ with the operator $R_{n}\oplus 0$ acting on $\ell^{2}(\NN)$

In what follows, $\{a_{n,N} \mid n,N\in\mathbb{N}\}\subset\bC$, $\{b_{n,N} \mid n\in\NN,\; N\in\mathbb{N}\}\subset\bC$, and 
\[
 J(N)=\begin{pmatrix}
	b_{0,N} & a_{1,N}\\
	a_{1,N} & b_{1,N} & a_{2,N}\\
	& a_{2,N} & b_{2,N} & a_{3,N}\\
	& & \ddots & \ddots & \ddots
      \end{pmatrix}
\]
stands for the semi-infinite (complex) Jacobi matrix. Finally, $P_{n}\in\mathcal{B}(\ell^{2}(\NN))$ denotes the orthogonal projection on 
$\spanl\{e_{0},\dots,e_{n}\}$.

\begin{proposition}\label{prop:strong_conv}
 Let
 \begin{equation}
  \lim_{n/N\to t}a_{n,N}=A \quad \mbox{ and } \lim_{n/N\to t}b_{n,N}=B.
 \label{eq:conv_assum}
 \end{equation}
 Further assume that there exists $\epsilon>0$ such that
 \begin{equation}
  \sup\{\|P_{n}J(N)P_{n}\| \mid |n/N-t|<\epsilon\}<\infty.
 \label{eq:unif_bound_cond}
 \end{equation}
 Then
 \[
  \slim_{n/N\to t} R_{n}J(N)R_{n}=J(A,B)
 \]
 where $J(A,B)\in\mathcal{B}(\ell^{2}(\NN))$ stands for the Jacobi matrix with constant diagonal $B$ and constant off-diagonal $A$.
\end{proposition}

\begin{proof}
 Recall an easily verifiable statement: Let $\mathcal{X}$ be a Banach space, $B\in\mathcal{B}(\mathcal{X})$ and $B_{n}$ a uniformly bounded sequence of operators acting on $\mathcal{X}$.
 If $B_{n}\varphi\to B\varphi$, as $n\to\infty$, for all $\varphi$ from a dense subset of $\mathcal{X}$, then $\slim_{\!n\to\infty} B_n=B$.
 
 Take arbitrary $\{n_{j}\},\{N_{j}\}\subset\N$, $N_{j}\to\infty$ and $n_{j}/N_{j}\to t$, for $j\to\infty$. Let us denote temporarily $\hat{J}_{j}=R_{n_{j}}J(N_{j})R_{n_j}$
  Assumption~\eqref{eq:unif_bound_cond} guarantees the existence of $j_{0}\in\N$ such that
 \[
  \sup_{j\geq j_{0}}\|P_{n_{j}}J(N_{j})P_{n_{j}}\|=\sup_{j\geq j_{0}}\|\hat{J}_{j}\|<\infty.
 \]
 Hence, taking into account the above statement, it suffices to verify that
 \[
  \lim_{j\to\infty}\hat{J}_{j}e_{n}=J(A,B)e_{n}, \quad \forall n\in\NN.
 \]
 
 Take $n\in\NN$ and $j>n$, then
 \[
  \|\hat{J}_{j}e_{n}-J(A,B)e_{n}\|^{2}=|a_{n_j-n+1,N_j}-A|^{2}+|b_{n_j-n,N_j}-B|^{2}+|a_{n_j-n,N_j}-A|^{2}\to0,
 \]
 for $j\to\infty$, by assumption \eqref{eq:conv_assum}. If $n=0$, set $a_{n_j+1,N_j}=0$ in the above equation.
 All in all, the claim is verified.
 \end{proof}

 It is again quite easy to see that for any $B_{n},B\in\mathcal{B}(\mathcal{X})$ such that $\sup_{n}\|B_{n}\|\leq M$ and $\slim_{\!n\to\infty}B_{n}=B$, one has
 \[
  \slim_{n\to\infty}(B_{n}-z)^{-1}=(B-z)^{-1}, \quad \forall z\notin D_{M}
 \]
 and the convergence is local uniform in $z$. Indeed, since
 \[
  (B-z)^{-1}-(B_{n}-z)^{-1}=(B_{n}-z)^{-1}(B_{n}-B)(B-z)^{-1}
 \]
 one obtains
 \[
 \|(B-z)^{-1}\varphi-(B_{n}-z)^{-1}\varphi\|\leq\frac{1}{|z|-M}\|(B_{n}-B)(B-z)^{-1}\varphi\|,
 \]
 from which the strong convergence of resolvents follows. The local uniformness follows, for example, from the Mantel's theorem.
 
 The next statement is in fact a corollary of Proposition \ref{prop:strong_conv}, however, it is also a complex generalization of \cite[Thm.~2.1]{KvA}.
 Therefore we formulate it as a proposition.

 \begin{proposition}{\cite[Thm.~2.1]{KvA}}\label{prop:KV21}
  Let the assumptions \eqref{eq:conv_assum} and \eqref{eq:unif_bound_cond} hold. Denote the value of the supremum in \eqref{eq:unif_bound_cond} by $M$. Then
  \[
   \lim_{n/N\to t}\langle e_n, \left(z-P_{n}J(N)P_{n}\right)^{-1}e_{n}\rangle=\frac{2}{z-B+\sqrt{(z-B)^{2}-4A^{2}}}
  \]
  locally uniformly in $\bC\setminus D_{M}$.
 \end{proposition}
 
 \begin{remark}
  Note that by \eqref{eq:conv_assum},
  \[
   M=\sup\{\|P_{n}J(N)P_{n}\| \mid |n/N-t|<\epsilon\}\geq |B|+2|A|.
  \]
 \end{remark}

 \begin{proof}
  It follows from Proposition \ref{prop:strong_conv} that
  \[
   \slim_{n/N\to t}\left(z-R_{n}J(N)R_{n}\right)^{-1}=\left(z-J(A,B)\right)^{-1}
  \]
  locally uniformly in $z\notin D_{M}$. Hence,
  \begin{align*}
   \lim_{n/N\to t}\langle e_n, \left(z-P_{n}J(N)P_{n}\right)^{-1}e_{n}\rangle&=\lim_{n/N\to t}\langle e_0, \left(z-R_{n}J(N)R_{n}\right)^{-1}e_{0}\rangle\\
   &=\langle e_0, \left(z-J(A,B)\right)^{-1}e_{0}\rangle
  \end{align*}
  locally uniformly in $z\notin D_{M}$. It is a standard result that 
  \[
  \langle e_0, \left(z-J(A,B)\right)^{-1}e_{0}\rangle=\frac{2}{z-B+\sqrt{(z-B)^{2}-4A^{2}}},
  \]
  for all $z\notin[B-2A,B+2A]$ (a line segment in $\bC$). To verify that one can show that
  \[
   \slim_{n\to\infty}P_{n}J(A,B)P_{n}=J(A,B)
  \]
  together with the formula  
  \[
   \langle e_0, \left(z-J_{n}(A,B)\right)^{-1}e_{0}\rangle=\frac{U_{n}\left(\frac{z-B}{2A}\right)}{A\ U_{n+1}\left(\frac{z-B}{2A}\right)}, \quad \forall z\notin[B-2A,B+2A]
  \]
  where $J_{n}(A,B)$ stands for the $(n+1)\times(n+1)$ truncation of $J(A,B)$, i.e., $J(A,B)=J_{n}(A,B)\oplus0$, and
  \[
   U_{n}(x)=\frac{\left(x+\sqrt{x^{2}-1}\right)^{n+1}-\left(x-\sqrt{x^{2}-1}\right)^{n+1}}{2\sqrt{x^{2}-1}}
  \]
  are Chebyshev polynomials of the second kind.
 \end{proof}

 Recall that the main result of paper \cite{KvA} in Theorem 1.4. which claims the following.
 
 \begin{theorem}{\cite[Thm.~1.4]{KvA}}\label{thm:KV14}
  If $\{a_{n,N} \mid n,N\in\mathbb{N}\}\subset\R_{+}$, $\{b_{n,N} \mid n\in\NN,\; N\in\mathbb{N}\}\subset\R$ and $\{p_{n,N} \mid n\in\NN,\; N\in\mathbb{N}\}$ associated family of orthonormal
  polynomials. Further let non-negative $a\in C(\R_{+})$ and real $b\in C(\R_{+})$ be given, such that
  \[
    \lim_{n/N\to t}a_{n,N}=a(t) \quad \mbox{ and } \lim_{n/N\to t}b_{n,N}=b(t),
  \]
  for all $t>0$. Then for the family of $\{\nu_{n,N} \mid n\in\NN,\; N\in\mathbb{N}\}$ of root-counting measures of polynomials $\{p_{n,N} \mid n\in\NN,\; N\in\mathbb{N}\}$, one has
  \[
   \wlim_{n/N\to t}\nu_{n,N}=\frac{1}{t}\int_{0}^{t}\omega_{[b(s)-2a(s),b(s)+2a(s)]}\dd s
  \]
  where $\omega_{[\alpha,\beta]}$ is absolutely continuous measure supported on $[\alpha,\beta]$ with density
  \[
   \frac{\dd \omega_{[\alpha,\beta]}}{\dd t}=\frac{1}{\pi\sqrt{(\beta-t)(t-\alpha)}},
  \]
  if $\alpha<\beta$. If $\alpha=\beta$, $\omega_{[\alpha,\beta]}=\delta_{\{\alpha\}}$.
 \end{theorem}

 For $t>0$, let us denote
 \[
  \sigma(t)=\frac{1}{t}\int_{0}^{t}\omega_{[b(s)-2a(s),b(s)+2a(s)]}\dd s.
 \]
 In the proof of Theorem \ref{thm:KV14}, authors prove that
 \begin{equation}
  \lim_{n/N\to t}U^{\nu_{n,N}}(z)=U^{\sigma(t)}(z)
  \label{eq:lim_logpot}
 \end{equation}
 locally uniformly in certain neighborhood of complex $\infty$ (i.e., for $|z|>M$), where $U^{\mu}$ denotes the logarithmic potential of the (compactly supported) Borel measure $\mu$.
 Under the assumptions of Theorem \ref{thm:KV14}, supports of all measures $\nu_{n,N}$, for all $n,N$ such that $n/N$ is close to $t$, are included in a real interval $[-M,M]$. This implies
 that the limit relation \eqref{eq:lim_logpot} holds true for all $z\notin[-M,M]$ and hence for almost all $z\in\bC$ (w.r.t. the Lebesgue measure). Under these conditions one can show (following 
 standard methods of Potential Theory - Widom's lemma) the weak convergence 
 \begin{equation}
   \wlim_{n/N\to t}\nu_{n,N}=\sigma(t).
 \label{eq:lim_weak}
 \end{equation}
 
 However, in the general case of complex sequences $a_{n,N}$ and $b_{n,N}$, one can get only the relation \eqref{eq:lim_logpot} outside a ball, $|z|>M$, and \textbf{not} for almost all $z\in\bC$.
 This however does not imply the weak convergence \eqref{eq:lim_weak}. To our best knowledge nor the existence of the weak limit is guaranteed (only a subsequence, by Helly's theorem). Thus, one can 
 not expect the validity of Theorem \ref{thm:KV14} in the complex setting. However, one can get at least the following.
 
\begin{proposition}\label{prop:complex}
 Let $a\in C([0,\infty))$ and $b\in C([0,\infty))$ be complex-valued functions and 
  \[
    \lim_{n/N\to t}a_{n,N}=a(t) \quad \mbox{ and } \lim_{n/N\to t}b_{n,N}=b(t),
  \]
  for all $t>0$. Then if the weak limit of root-counting measures
  \[
   \nu=\wlim_{n/N\to t}\nu_{n,N}
  \]
  exists, then $\nu$ and $\sigma(t)$ are equipotential measures, i.e., their logarithmic potentials coincide outside the union of their supports.
\end{proposition}

\begin{remark}
 Note the functions $a$ and $b$ are assumed to be continuous in $0$ (from the right). This additional condition simplifies the proof considerably and we do not aim here to achieve a full generality.
\end{remark}

\begin{proof}
 Note that coefficients $a_{n,N}$ and $b_{n,N}$ are uniformly bounded if $n/N$ is restricted to a compact subsets of $[0,\infty)$, as it follows from the assumptions.
 Take $t>0$ and $0<\epsilon<t$, then
 \[
  \sup\{|b_{n,N}| \mid |n/N-t|\leq\epsilon\}+2\sup\{|a_{n,N}| \mid |n/N-t|\leq\epsilon\}<\infty.
 \]
 Denote by $J_{n}(N)$ the $(n+1)\times(n+1)$ truncation of $J(N)$. Consequently, condition \eqref{eq:unif_bound_cond} is fulfilled and let us denote the uniform bound of operators $J_{n}(N)$, for
 $|n/N-t|\leq\epsilon$, by $M$.
 
 The following part proceeds analogously as the proof of \cite[Thm.~1.4]{KvA}. Since
 \[
  \frac{\det\left(z-J_{n-1}(N)\right)}{\det\left(z-J_{n}(N)\right)}=\langle e_{n}, (z-J_{n}(N))^{-1} e_{n} \rangle,
 \]
 one has 
 \[
  \det\left(z-J_{n}(N)\right)=\prod_{k=0}^{n}\frac{1}{\langle e_{k}, (z-J_{k}(N))^{-1} e_{k}\rangle}, \quad |z|>M.
 \]
 Thus,
 \[
  U^{\nu_{n,N}}(z)=\frac{1}{n+1}\log\left|\det\left(z-J_{n}(N)\right)\right|=-\frac{1}{n+1}\sum_{k=0}^{n}\log\left|\langle e_{k}, (z-J_{k}(N))^{-1} e_{k}\rangle\right|,
 \]
 or equivalently
 \[
  U^{\nu_{n,N}}(z)=-\int_{0}^{1}\log\left|\langle e_{[ns]}, (z-J_{[ns]}(N))^{-1} e_{[ns]}\rangle\right|\dd s, \quad |z|>M.
 \]
 As $n/N\to t$, one has $[sn]/N\to st$. Hence, by Proposition \ref{prop:KV21},
 \[
  \lim_{n/N\to t}\langle e_{[ns]}, (z-J_{[ns]}(N))^{-1} e_{[ns]}\rangle=\frac{2}{z-b(st)+\sqrt{(z-b(st))^{2}-4a(st)^{2}}}.
 \]
 Further, by Lemma \ref{lem:KVlem}, one has
 \[
  \frac{1}{2|z|}\leq\left|\langle e_{[ns]}, (z-J_{[ns]}(N))^{-1} e_{[ns]}\rangle\right|\leq\frac{1}{|z|-M},
 \]
 for $|z|>3M$. Consequently, the Lebesgue's dominated convergence theorem applies and we get
 \begin{align*}
  \lim_{n/N\to t}U^{\nu_{n,N}}(z)&=\int_{0}^{1}\log\left| \frac{z-b(st)}{2}+\sqrt{\left(\frac{z-b(st)}{2}\right)^{2}-a(st)^{2}}\right| \dd s\\
				 &=\frac{1}{t}\int_{0}^{t}\log\left| \frac{z-b(s)}{2}+\sqrt{\left(\frac{z-b(s)}{2}\right)^{2}-a(s)^{2}}\right| \dd s\\
 \end{align*}
 for $|z|>3M$. The function in the last integral is known to coincide with the logarithmic potential of $\omega_{[b(s)-2a(s),b(s)+2a(s)]}$ at $z$.
 All in all, we obtained
 \[
 U^{\nu}(z)=\frac{1}{t}\int_{0}^{t}U^{\omega_{[b(s)-2a(s),b(s)+2a(s)]}}(z)\dd s=U^{\sigma(t)}(z),
 \]
 for $|z|>3M$. By the harmonicity of logarithmic potentials $U^{\mu}$ outsides the support $\mu$ and Identity principle for harmonic functions the last equality
 can be extended to all $z\notin(\supp\nu \cup\supp\sigma(t))$.
\end{proof}

\end{document}